\def\E{\mathbb{E}}
\def\P{\mathbb{P}}
\def\y{\boldsymbol{YES}}
\def\n{\boldsymbol{NO}}
\theoremstyle{definition}
\newtheorem{theorem}{Theorem}
\newtheorem{corollary}{Corollary}
\newtheorem{definition}{Definition}
\newtheorem{lemma}{Lemma}
\newtheorem{example}{Example}
\newtheorem{question}{Question}
\begin{document}

\title{Empirical Distribution of Equilibrium Play and \\ Its Testing Application} 

\author{Yakov Babichenko\footnote{Center for the Mathematics of Information, California Institute of Technology. E-mail: yakovbab@tx.technion.ac.il.} \and Siddharth Barman\footnote{Center for the Mathematics of Information, California Institute of Technology. E-mail: barman@caltech.edu.} \and Ron Peretz\footnote{Department of Mathematics, London School of Economics. E-mail: ronprtz@gmail.com}}

\date{}
\maketitle

\begin{abstract}

We show that in any $n$-player $m$-action normal-form game, we can obtain an approximate equilibrium by sampling any mixed-action equilibrium a small number of times. We study three types of equilibria: Nash, correlated and coarse correlated. For each one of them we obtain upper and lower bounds on  the number of samples required for the empirical distribution over the sampled action profiles to form an approximate equilibrium with probability close to one.

These bounds imply that using a small number of samples we can test whether or not players are playing according to an approximate equilibrium, even in games where $n$ and $m$ are large. In addition, our results substantially improve previously known upper bounds on the support size of approximate equilibria in games with many players. In particular, for all the three types of equilibria we show the existence of approximate equilibrium with support size polylogarithmic in $n$ and $m$, whereas the previously best-known upper bounds were polynomial in
$n$~\cite{HRS,GL,HMbook}.

\end{abstract}

\section{Introduction}


Equilibria are central solution concepts in game theory. They are widely used as models to explain the observed behavior of self-interested entities like human players in strategic settings. Arguably the most prominent examples of such notions of rationality are Nash equilibrium~\cite{Nash}, correlated equilibrium~\cite{aumann74}, and coarse correlated equilibrium~\cite{hannan}.  At a high level, these concepts denote distributions over players' action profiles where no player can benefit, in expectation, by unilateral deviation. Though, in most empirical applications of game theory these underlying distributions are not explicitly known to an observer.  Rather, what one observes is the behavior of the players, i.e., players' realized actions. 

%

This naturally leads us to consider a setting in which players implement an underlying distribution---i.e., a mixed strategy---during multiple plays of the same game. Here, the mixed strategy is not known to an outside observer. Rather, the observer sees the (pure) actions selected by the players during the play; in other words, she observes independent and identically distributed (i.i.d.) samples from the mixed strategy. This framework captures typical (empirical) applications of game theory, and entails the following fundamental question: How many samples from an equilibrium (Nash, correlated, or coarse correlated) play are required to ensure that the induced empirical distribution forms an approximate equilibrium\footnote{An approximate equilibrium, with approximation parameter $\varepsilon >0$, is a distribution over action profiles at which no player has more than an $\varepsilon$ incentive to deviate.} (again, Nash, correlated, or coarse correlated)?  The main objective of this paper is to show that even in large games---i.e., games with a large number of players and/or actions---an extremely small number of samples generate an approximate equilibrium with high probability.  This result has several useful interpretations. 

1. \textbf{Testing whether players are playing according to an equilibrium.} In many strategic settings, it is important to test whether players are playing according to an equilibrium or not, but experimental data is limited and costly. Such scenarios are studied throughout experimental economics. In such contexts it is desirable to have tests that are reliable and require a small number of data points. Another case wherein this testing exercise is relevant is when the same game is played multiple times in independent environments. We observe a limited number of outcomes/data and our goal is to analyze, through the data, whether the agents are implementing an equilibrium. Our results (Theorems \ref{theo:ne-test}, \ref{theo:ce-test} and Corollary \ref{cor:cce-test}) show that we can accomplish this testing task even with a small dataset (i.e., few samples) that consists of i.i.d.~action profiles drawn from the underlying mixed strategy. Moreover the results show that the test can be performed via a direct algorithm: we simply need to check whether or not the empirical distribution of the observed data is an approximate equilibrium.

2. \textbf{Existence of \emph{simple} approximate equilibrium.} The existence of approximate equilibrium with support size polynomial in the number of players has been established in prior work (see~\cite{A}, \cite{LMM}, and~\cite{HRS}). In particular, these results show that in every normal-form game there exists an approximate equilibrium that is simply the uniform distribution over a small set of action profiles; such approximate equilibria are referred to as simple approximate equilibria. Our result extends this line of work by substantially improving the upper bounds on the support size of simple approximate equilibrium. In particular, Corollaries~\ref{cor:ne-ss} and~\ref{cor:cce-ss} along with Theorem \ref{theo:ce-ss} show that in every game there exists an approximate equilibrium of support size polylogarithmic in $n$ (the number of players) and $m$ (the number of actions per player), see Table \ref{tbl:2}.


3. \textbf{Population games.} Sampling occurs in life very naturally. In biology  \cite{SP}, for example, members of a species come in different traits: sex, size, color, etc. Every newborn has a trait which is sampled according to some distribution prescribed by its species. An ecological system is abstractly modeled as a game, where the players are species, the strategies are their traits, and mixed strategies represent distributions of traits among individuals of a species. This abstraction relies on the assumption that the size of the population is large enough to represent the mixed strategy of the species faithfully. The present paper provides estimates on what should be considered as ``large enough'' for such strategic reasoning.
The above situation is modeled in detail by a \emph{population game}. Here, each individual is modeled as a player, where members of the same species are identical players. It is assumed that individuals can't change their trait (sex, color, etc.); therefore it is reasonable to look at (approximate) pure equilibria. The diversity of traits is explained by assuming some source of randomization in the creation of individuals; therefore the special interest in sampled equilibria.  

Another justification for seeking such sampling-based pure equilibria was given by Kalai \cite{kalai04}, who viewed them as \emph{ex-post} equilibria. That is, the players play a mixed strategy equilibrium, and even after their strategies are realized they have no incentive to modify them. 

Our results can be interpreted as bounds on the minimal size of the population that ensures an ex-post equilibrium in a population game. Kalai \cite{kalai04} worked in the more general framework of semi-anonymous games. The minimal number of players needed for such an equilibrium to emerge was studied by Azrieli and Shmaya in the even more general framework of Lipschitz games \cite{AS}. It should be noted that the results we obtain for the special case of population games are stronger than what one could hope to obtain for arbitrary Lipschitz games.

4. \textbf{Short-Run Stability of Equilibrium in Repeated Games with Bounded Rationality}. The premise that players do not know their opponents' utility functions is a central construct in the study of \emph{uncoupled learning} in repeated games (see \cite{HMbook} and \cite{Ybook}). A reasonable assumption along these lines is that players do not know the mixed strategy of their opponents. This gives us a repeated-game model in which every player learns her opponents' mixed strategies by observing the actions that they play in each repetition of the game.\footnote{Note that this assumption holds in most repeated-game models, i.e., perfect monitoring.} In this framework, a natural way of learning opponents' mixed strategy is through the {empirical distribution} (i.e., approximating opponents' mixed strategy by the empirical distribution of the observed action profiles). Such a learning process has been  considered in the classical fictitious play literature~\cite{R}, and in many other recent results; e.g., the regret-testing dynamics (see \cite{FY} and \cite{GL2}).

Overall, we get the following key question in settings where players learn their opponents' mixed strategy through pure-action samples played by the opponents: How fast does the empirical distribution of equilibrium play forms an approximate equilibrium? 

The answer to this question sheds light on the short-run stability of equilibria. If the empirical distribution does not form an approximate equilibrium after multiple iterations (i.e., after multiple plays), then it is likely that an impatient player---who is uncertain about her opponents' strategies---will infer from the observed samples that her current mixed strategy is not a best reply to her opponents' strategies and deviate to some other strategy. This exact setup has been considered in \cite{FY} and \cite{GL2}. Our results imply that, even in large games, after a small number of iterations with high probability such a situation will not occur. I.e.,  the equilibria are stable in the short run. 

Specifically, we provide almost tight bounds for the above question (see Table \ref{table1}). These bounds in particular show that after a small number of iterations, each player can learn whether her current strategy is approximately a best reply to her opponents' strategies.

\subsection{Informal Statement of the Results}
We consider large normal-form games with $n$ players and $m$ actions per player (here at least one of the numbers $n$ or $m$ is large). Let $x$ be an equilibrium of the game (Nash, correlated, or coarse correlated), which is a distribution over the action profiles. We observe $k$ i.i.d.~samples from $x$.

For the case of Nash equilibrium, since the players are playing according to a product distribution, the correct notion for the \emph{empirical distribution of play} is the product of the empirical distributions of the actions played by each player. For the cases of correlated  and coarse correlated equilibrium the correct notion for the \emph{empirical distribution of play} is simply the empirical distribution of the sampled action profiles.

The empirical distribution of play is a random variable. In this paper we address the following question: how large should the number of samples $k$ be to guarantee that the empirical distribution of play forms an approximate equilibrium with probability close to $1$? We provide almost tight bounds for this question, which are summarized in the table below.

\begin{table}[h]
\begin{center}
\begin{tabular}{|ccc}
\hline 
\multicolumn{1}{|c|}{Equilibrium}      & \multicolumn{1}{|c|}{Upper Bound}          & \multicolumn{1}{|c|}{Lower bound}      \\ \hline
\multicolumn{1}{|c|}{Nash}              & \multicolumn{1}{|c|}{$k = O(\log m + \log n)$}    & \multicolumn{1}{|c|}{$k = \Omega (\log m + \log n)$}   \\
\multicolumn{1}{|c|}{}       & \multicolumn{1}{|c|}{Theorem \ref{theo:ne}}          & \multicolumn{1}{|c|}{Examples \ref{ex:al} and \ref{ex:n}} \\ \hline
\multicolumn{1}{|c|}{Correlated}        & \multicolumn{1}{|c|}{$k = O(m \log m + \log n)$} & \multicolumn{1}{|c|}{$k = \Omega (m + \log n)$}        \\
\multicolumn{1}{|c|}{}       & \multicolumn{1}{|c|}{Theorem \ref{theo:ce}}          & \multicolumn{1}{|c|}{Examples \ref{ex:cor-m} and \ref{ex:n}} \\ \hline
\multicolumn{1}{|c|}{Coarse Correlated} & \multicolumn{1}{|c|}{$k = O(\log m + \log n)$}    & \multicolumn{1}{|c|}{$k = \Omega (\log m + \log n)$}   \\
\multicolumn{1}{|c|}{}       & \multicolumn{1}{|c|}{Theorem \ref{theo:cce}}         & \multicolumn{1}{|c|}{Examples \ref{ex:al} and \ref{ex:n}} \\ \hline

\end{tabular}
\caption{Bounds on the number of samples that are required for the empirical distribution of a play to form an approximate equilibrium with probability close to 1.}\label{table1}
\end{center}
\end{table}


Moreover, we show that if players are playing according to a distribution that is \emph{not} an approximate equilibrium, then for the same values of $k$ (as in the upper bound column in Table \ref{table1}), with probability close to $1$, the empirical distribution of play will \emph{not} form an approximate equilibrium. Therefore, we can test whether players are playing according to an approximate equilibrium using $k$ samples (see Theorems \ref{theo:ne-test} and Corollaries \ref{theo:ce-test} and \ref{cor:cce-test}). These results suggests that even in games with a very large number of players or a very large number of actions there exists efficient tests to determine whether players are playing according to a Nash equilibrium or a coarse correlated equilibrium. Correlated equilibrium on the other hand is a slightly more complicated notion in this respect. We establish this by proving that there does exists a test for approximate correlated equilibrium that uses less than $\Omega(\sqrt{m})$ samples (see Theorem \ref{theo:cor-im}). This is in contrast to Nash equilibrium and coarse correlated equilibrium that require only $O(\log m)$ samples. 

The fact that the empirical distribution of play forms an approximate equilibrium with merely {positive} probability is interesting in its own right. The fact that this event occurs with positive probability establishes, via the probabilistic method, the existence of an approximate equilibrium with small \emph{support size}. Note that ``support size'' has different meanings in the case of Nash equilibrium and the cases of correlated and coarse correlated equilibrium. For Nash equilibrium the support size is the maximum number of actions that are played with positive probability by any single player. For correlated and coarse correlated equilibrium, the support size is the number of \emph{action profiles} that have a non-zero probability of being played in the approximate equilibrium. 

Small-support approximate Nash equilibria have been previously studied  in \cite{A}, \cite{LMM}, and \cite{HRS}. Alth\"{o}fer \cite{A} along with Lipton and Young~\cite{LY} studied the problem for two-player $m$-action zero-sum games and established a $\Theta(\log m)$ bound on the support size. Lipton, Markakis, and Mehta~\cite{LMM} studied the same question of general $n$-player $m$-actions games and achieved an upper bound of $O(n^2 \log m)$. H\'{e}mon et al. \cite{HRS} improved the bound of \cite{LMM} to $O(n\log m)$. Our result implies the existence of an approximate Nash equilibrium with support size $O(\log m + \log n)$ (see Corollary \ref{cor:ne-ss}). This result gives us an algorithm for computing an approximate Nash equilibrium with running time $O(N^{\log \log N})$ in games where  $m=poly (n)$ (i.e., the number of actions is not significantly larger than the number of players). Here $N$ is the input size of the game; see Corollary \ref{cor:ne-alg}. Note that the running time of the best previously known algorithm for this problem is $O(N^{\log N})$ (see \cite{N}). Hence, the current paper provides an exponential improvement upon prior work in the computation of approximate Nash equilibrium in large games. 

Support-size upper bounds for \emph{exact} correlated equilibria were studied by Germano and Lugosi in \cite{GL}. Specifically, they showed that every $n$-player $m$-action game admits a correlated equilibrium with support size $O(nm^2)$. Applying the technique of \cite{GL} we can obtain a bound of $O(nm)$ on the support size of exact coarse correlated equilibrium. The results in this paper prove that for \emph{approximate} correlated equilibrium and \emph{approximate} coarse correlated equilibrium the size of the support can be significantly reduced to $O(\log m (\log m + \log n))$ and $O(\log m + \log n)$ respectively.

Our bounds on the support size of approximate equilibria and previously known results are summarized in Table~\ref{tbl:2}.

\begin{table}[h]
\begin{center}
\begin{tabular}{ccc}
\hline
\multicolumn{1}{|c|}{Approximate Equilibrium} & \multicolumn{1}{|c|}{Our Results} & \multicolumn{1}{|c|}{Previous Bounds}                        \\ \hline
\multicolumn{1}{|c|}{Nash}              & \multicolumn{1}{|c|}{$O(\log m + \log n)$}          & \multicolumn{1}{|c|}{$O(n\log m)$}     \\
\multicolumn{1}{|c|}{}       & \multicolumn{1}{|c|}{Corollary \ref{cor:ne-ss}}     & \multicolumn{1}{|c|}{\cite{HRS}}       \\ \hline
\multicolumn{1}{|c|}{Correlated}        & \multicolumn{1}{|c|}{$O(\log m (\log m + \log n))$} & \multicolumn{1}{|c|}{Exact: $O(nm^2)$} \\
\multicolumn{1}{|c|}{}       & \multicolumn{1}{|c|}{Theorem \ref{theo:ce-ss}}      & \multicolumn{1}{|c|}{\cite{GL}}        \\ \hline
\multicolumn{1}{|c|}{Coarse Correlated} & \multicolumn{1}{|c|}{$O(\log m + \log n)$}          & \multicolumn{1}{|c|}{Exact: $O(nm)$}   \\
\multicolumn{1}{|c|}{}       & \multicolumn{1}{|c|}{Corollary \ref{cor:cce-ss}}    & \multicolumn{1}{|c|}{\cite{GL}}        \\ \hline
\end{tabular}
\caption{Support size of approximate equilibria.}\label{tbl:2}
\end{center}
\end{table}

\section{Notation and Preliminaries}

We consider $n$-player $m$-action games, i.e., games with $n$-player and $m$ actions per player.\footnote{All the results in the paper hold also for the case where each player has a different number of actions (i.e., player $i$ has $m_i$ actions). For ease of presentation, we assume throughout that all players have the same number of actions $m$.} Write $N:=nm^n$ to denote the input size of such games; i.e., the size of a list that enumerates players' utilities at every pure action profile. We use the following standard notation. The set of players is $[n]=\{1,2,...,n\}$. The set of actions of each player is $A_i=[m]=\{1,2,...,m\}$. The set of strategy profiles is $A=[m]^n$. The set of all probability distributions over a set $B$ is denoted by $\Delta(B)$. Therefore, $\Delta(A)$ is the set of all probability distributions over the action profiles, and $\Pi_{i\in [n]} \Delta(A_i)$ is the set of all \emph{product distributions}. For a vector $v=(v_j)_{j\in [n]}$, we denote by $v_{-i}:=(v_j)_{{j\neq i, j\in [n]}}$ the vector that does not contain the $i$'th coordinate. The payoffs of the players are normalized between $0$ and $1$. Specifically, the payoff function of player $i$ is denoted by  $u_i: A\rightarrow [0,1]$ and it can be extended to $u_i:\Delta(A)\rightarrow [0,1]$ by $u_i(x):=\E_{a\sim x} [u_i(a)]$.

\begin{definition}\label{def:k-uni}
A distribution over a set $B$ is called \emph{$k$-uniform} if it is the uniform distribution over a size-$k$ multiset of elements from $B$. Equivalently, $x\in \Delta(B)$ is $k$-uniform iff $x(b)=\frac{c_b}{k}$ for every $b\in B$ where $c_b \in \mathbb{N}$. The set of all $k$-uniform distributions on $B$ is denoted $\Delta_k(B)$.
\end{definition} 

%
%
%
%
%

\section{Approximate Nash Equilibrium}\label{sec:ne}

\begin{definition}
A product distribution $x=(x_i)_{i\in [n]}$ is an \emph{$\varepsilon$-Nash equilibrium} if no player can gain more than $\varepsilon$ by deviating to another strategy. Formally, $u_i(x)\geq u_i(a_i,x_{-i})-\varepsilon$ for every $i\in [n]$ and every $a_i\in A_i$.

When $\varepsilon =0$ we say that $x$ is an \emph{exact Nash equilibrium}, or simply \emph{Nash equilibrium}.
\end{definition}

Throughout the paper we refer to $\varepsilon$ as a \emph{constant}, and derive asymptotic results for games where at least one of the parameters $m$ or $n$ goes to infinity.

Assume that the players are playing according to a product distribution $x=(x_i)_{i\in [n]}$. We observe $k$ i.i.d.~\emph{samples} from $x$ that are denoted $(a(t))_{t\in [k]}$, where each $a(t)\in A$. Since we assume that the players are playing according to a product distribution, the correct interpretation of the observed data is as follows. We denote by $s^k_i$ the \emph{empirical distribution of player} $i$, defined to be the empirical distribution of the samples $(a_i(t))_{t\in [k]}$. Formally, $s^k_i(a_i) :=\frac{1}{k}|\{t :a_i(t)=a_i\}|$. The \emph{product empirical distribution of play} is the product distribution $\Pi_i  \  s^k_i$.

The following theorem states that if players are playing according to a Nash equilibrium then the product empirical distribution of play (which is a random variable) is an $\varepsilon$-Nash equilibrium after $k=O(\log n + \log m)$ samples, with probability close to $1$.

\begin{theorem}\label{theo:ne}
Let $x$ be a Nash equilibrium of an $n$-player $m$-action game and parameters $\varepsilon,\alpha \in (0,1)$. Then, the product empirical distribution of play $(s^k_i)_{i\in [n]}$ defined over $k$ i.i.d.~samples from $x$ is an $\varepsilon$-Nash equilibrium with probability greater than $1-\alpha$, for every 
\begin{align*}
k &> \frac{8(\ln m + \ln n - \ln \alpha - \ln \varepsilon + \ln 8)}{\varepsilon^2}=O(\log m + \log n + |\log \alpha|).
\end{align*}
\end{theorem}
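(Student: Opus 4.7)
The plan is to transfer the population-level inequality $u_i(x_i,x_{-i}) \geq u_i(a_i,x_{-i})$ to the empirical product $\prod_i s^k_i$ by controlling three scalar approximation errors via Hoeffding's inequality and a union bound. The crucial structural observation is that, since $x$ is a product distribution, within each sample $t$ the coordinates $a_i(t)$ and $a_{-i}(t)$ are independent, and samples are independent across $t$; hence $s^k_i$ and $s^k_{-i}$ depend on disjoint collections of i.i.d.\ coordinates and are \emph{mutually independent} random variables. This independence is what tames the otherwise bilinear expression $u_i(s^k_i,s^k_{-i})$.

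With this in hand, I apply Hoeffding in two flavors. First, for each $i \in [n]$ and each pure action $a_i \in A_i$, the quantity $u_i(a_i,s^k_{-i}) = \frac{1}{k}\sum_{t=1}^k u_i(a_i,a_{-i}(t))$ is an average of $k$ i.i.d.\ variables in $[0,1]$ with mean $u_i(a_i,x_{-i})$, so $\P[|u_i(a_i,s^k_{-i}) - u_i(a_i,x_{-i})| > \varepsilon/4] \leq 2 e^{-k\varepsilon^2/8}$. Second, for each $i$, the scalar $u_i(s^k_i,x_{-i}) = \frac{1}{k}\sum_{t=1}^k u_i(a_i(t),x_{-i})$ concentrates around $u_i(x_i,x_{-i})$ with the same tail bound. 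A union bound over the $nm$ pairs $(i,a_i)$ and the $n$ players caps the total failure probability by $4nm\, e^{-k\varepsilon^2/8}$, which falls below $\alpha$ for the $k$ stated in the theorem.

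On the resulting good event I telescope
\[
u_i(s^k_i,s^k_{-i}) - u_i(a_i,s^k_{-i}) = \bigl[u_i(s^k_i,s^k_{-i}) - u_i(s^k_i,x_{-i})\bigr] + \bigl[u_i(s^k_i,x_{-i}) - u_i(x_i,x_{-i})\bigr] + \bigl[u_i(x_i,x_{-i}) - u_i(a_i,x_{-i})\bigr] + \bigl[u_i(a_i,x_{-i}) - u_i(a_i,s^k_{-i})\bigr].
\]
The third bracket is $\geq 0$ by the Nash property of $x$; the second and fourth are $\geq -\varepsilon/4$ by the two Hoeffding families; and the first equals $\E_{b\sim s^k_i}[u_i(b,s^k_{-i})-u_i(b,x_{-i})]$, which is $\geq -\varepsilon/4$ since the pointwise bound holds for every $b \in A_i$ (hence for the convex average). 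The total is $\geq -3\varepsilon/4 > -\varepsilon$, simultaneously for every player $i$ and every deviation $a_i$, giving the claimed $\varepsilon$-Nash property.

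The main obstacle is the bilinear term $u_i(s^k_i,s^k_{-i})$, which is not itself an average of i.i.d.\ variables; the independence of $s^k_i$ and $s^k_{-i}$ noted above, combined with the four-term telescoping, is what reduces the problem to univariate Hoeffding bounds. Quantitatively, the factor of $8/\varepsilon^2$ (rather than $2/\varepsilon^2$) is forced by taking each deviation at scale $\varepsilon/4$ so that three separate error contributions fit inside a single $\varepsilon$ slack.
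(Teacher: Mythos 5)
There is a genuine gap at the heart of your argument: the identity $u_i(a_i,s^k_{-i})=\frac{1}{k}\sum_{t=1}^k u_i(a_i,a_{-i}(t))$ is false. The right-hand side is the expectation of $u_i(a_i,\cdot)$ under the \emph{joint} empirical distribution of the sampled profiles $a_{-i}(t)$, whereas $s^k_{-i}$ is defined as the \emph{product} $\prod_{j\neq i}s^k_j$ of the per-player empirical marginals. These two distributions share marginals but differ in general (e.g., with two opponents and samples $(1,1)$ and $(2,2)$, the joint empirical distribution is supported on the diagonal while the product of marginals puts mass $1/4$ on all four profiles). Correctly expanded, $u_i(a_i,s^k_{-i})=\frac{1}{k^{n-1}}\sum_{j_1,\ldots,j_{n-1}\in[k]}u_i\bigl(a_i,(a_j(t_j))_{j\neq i}\bigr)$ is an average over $k^{n-1}$ \emph{dependent} terms (a $V$-statistic in the samples), so the univariate Hoeffding bound you invoke does not apply. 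This is precisely the obstacle the paper's Lemma \ref{pro:main} (proved via Theorem \ref{theo:hoe-gen}, a concentration inequality for product measures established by averaging over cyclic shifts of the sample indices and applying Markov's inequality) is designed to overcome; its weaker tail $\frac{4e^{-\varepsilon^2k/8}}{\varepsilon}$, with the extra $1/\varepsilon$ factor, is why the theorem's bound on $k$ carries a $-\ln\varepsilon$ term that your computation does not produce. The independence of $s^k_i$ from $s^k_{-i}$, which you correctly note, does not help here: the problem lies inside $s^k_{-i}$ itself, where the $n-1$ marginals are all built from the same $k$ samples.

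The remainder of your argument is sound modulo that lemma. Your four-term telescoping with a separate Hoeffding bound for $u_i(s^k_i,x_{-i})$ around $u_i(x_i,x_{-i})$ is legitimate (that quantity genuinely is an average of $k$ i.i.d.\ terms) and mirrors the paper's proof of the testing result, Theorem \ref{theo:ne-test}. For the present theorem, however, that step is unnecessary: since $x$ is an \emph{exact} Nash equilibrium, every action in the support of $s^k_i$ lies in the support of $x_i$ and is therefore a best reply to $x_{-i}$, which is how the paper passes from $u_i(a_i,x_{-i})$ to $\sum_{a_i'}s^k_i(a_i')u_i(a_i',x_{-i})$ without any additional probabilistic event. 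To repair your proof, replace your first Hoeffding family with an appeal to Lemma \ref{pro:main} (or prove an analogous concentration statement for the product empirical distribution) and adjust the union bound and the constant in $k$ accordingly.
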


We emphasize the \emph{logarithmic} dependence of the number of samples $k$ on the probability of error $\alpha$, which means that in order to reduce  the probability of error by a factor of two we should increase the number of samples only by a constant ($\frac{8\ln 2}{\varepsilon^2}$).

A proof of Theorem \ref{theo:ne} is given in Section \ref{sec:ne-proofs}. We note that the bound $O(\log m + \log n)$ is tight (up to a constant factor), see Examples~\ref{ex:n} and~\ref{ex:al} in Section \ref{sec:lb}.

\subsection{Existence of Simple Approximate Nash Equilibrium}

Note that if $s^k$ is an $\varepsilon$-Nash equilibrium with positive probability then \emph{there exists} a multiset of $k$ samples that forms an $\varepsilon$-Nash equilibrium; this claim follows from the probabilistic method. Note also that the empirical distribution of every player $i$ is a $k$-uniform distribution (see Definition \ref{def:k-uni}). This simple observation implies the following corollary from Theorem \ref{theo:ne}. Here, we say that an approximate Nash equilibrium $\prod_i s_i$ is $k$-uniform if every $s_i$ is a $k$-uniform distribution.  

\begin{corollary}\label{cor:ne-ss}
Every $n$-player $m$-action game admits an $k$-uniform $\varepsilon$-Nash equilibrium for every 
\begin{align*}
k> \frac{8(\ln m + \ln n - \ln \varepsilon + \ln 8)}{\varepsilon^2} =O(\log m + \log n).
\end{align*}
\end{corollary}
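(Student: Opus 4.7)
The plan is a direct application of the probabilistic method, using Theorem \ref{theo:ne} as the black box. First I would invoke Nash's theorem to fix any exact Nash equilibrium $x^*=(x^*_i)_{i\in [n]}$ of the game. This step is essential because Theorem \ref{theo:ne} assumes the underlying mixed profile being sampled is already a Nash equilibrium; without this, there is nothing to apply the sampling bound to.

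Next I would verify that the bound in the corollary statement is strictly stronger than the bound appearing in Theorem \ref{theo:ne} for some admissible choice of $\alpha \in (0,1)$. Concretely, given $k > \tfrac{8(\ln m + \ln n - \ln \varepsilon + \ln 8)}{\varepsilon^2}$, the inequality is strict, so we can choose $\alpha \in (0,1)$ sufficiently close to $1$ so that $-\ln \alpha > 0$ is small enough to preserve
\[
k > \frac{8(\ln m + \ln n - \ln \alpha - \ln \varepsilon + \ln 8)}{\varepsilon^2}.
\]
With this choice of $\alpha$, Theorem \ref{theo:ne} guarantees that the product empirical distribution $(s^k_i)_{i\in [n]}$ obtained from $k$ i.i.d.\ samples from $x^*$ forms an $\varepsilon$-Nash equilibrium with probability at least $1-\alpha>0$.

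Since this event occurs with strictly positive probability, by the probabilistic method there exists at least one realization of the sample sequence $(a(t))_{t\in [k]}$ for which the resulting product empirical distribution is an $\varepsilon$-Nash equilibrium. Finally, I would observe that by construction each individual empirical distribution $s^k_i(a_i)=\tfrac{1}{k}|\{t: a_i(t)=a_i\}|$ assigns every action a probability that is an integer multiple of $1/k$, so $s^k_i \in \Delta_k(A_i)$ in the sense of Definition \ref{def:k-uni}. Hence the constructed profile is a $k$-uniform $\varepsilon$-Nash equilibrium, as required.

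There is no real obstacle in this argument; the only subtle point is the careful bookkeeping of the strict inequality to extract $\alpha<1$ from the corollary's hypothesis, which is a one-line observation rather than a substantive step. The entire content of the corollary is the sampling theorem combined with the observation that empirical distributions are automatically $k$-uniform.
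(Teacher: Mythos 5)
Your proposal is correct and follows exactly the paper's route: fix an exact Nash equilibrium, choose $\alpha$ close enough to $1$ that the strict inequality on $k$ absorbs the $-\ln\alpha$ term in Theorem \ref{theo:ne}, invoke the probabilistic method on the positive-probability event, and note that each $s^k_i$ is $k$-uniform by construction. The paper states this same argument more tersely in the paragraph preceding the corollary; your explicit handling of the $\alpha\to 1$ limit is the only detail it leaves implicit.
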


This corollary guarantees the existence of an approximate equilibrium (in every $n$-player $m$-action game) where each player uses only a small number of actions in her mixed strategy (at most $O(\log m + \log n)$). Another useful consequence, is the simplicity of the probabilistic structure of the mixed strategy of each player. To see it, consider, for example, the case of $n$-player $2$-action games. Here, corollary \ref{cor:ne-ss} with 
\begin{equation*}
k=\left\lfloor \frac{8}{\varepsilon^2}(\ln n + \ln \varepsilon + \ln 16) \right\rfloor+1=O(\log n),
\end{equation*}
implies that there exists an $\varepsilon$-Nash equilibrium in which each player $i$ uses a mixed strategy of the form $(\frac{c_i}{k},1-\frac{c_i}{k})$, where $c_i\in \mathbb{N}$.

The fact that such a simple approximate Nash equilibrium exists allows us to find an approximate Nash equilibrium just by exhaustively searching over all the possible $n$-tuples of $k$-uniform strategies. Although the algorithm is simple, to the best of our knowledge it provides the best-known running-time bound for this problem.

\begin{corollary}\label{cor:ne-alg}
Given an $n$-player $m$-action normal-form game and constant $\varepsilon>0$, there exists an algorithm that computes an $\varepsilon$-Nash equilibrium of the game in time $\min\{m^{nk},k^{nm}\}$, with $k=O(\log m + \log n)$.
\end{corollary}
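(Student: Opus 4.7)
The plan is to combine the existence result of Corollary \ref{cor:ne-ss} with a brute-force search over $k$-uniform strategy profiles. By Corollary \ref{cor:ne-ss}, for $k = O(\log m + \log n)$ every $n$-player $m$-action game admits a $k$-uniform $\varepsilon$-Nash equilibrium $(s_i)_{i \in [n]}$. Hence, to find an $\varepsilon$-Nash equilibrium it suffices to iterate over the finite collection of all profiles of $k$-uniform mixed strategies and verify, for each one, whether the $\varepsilon$-Nash condition holds.

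To obtain the two terms in the $\min$, I would use two different enumerations of $\Delta_k(A_i)$. In the first, a $k$-uniform distribution is represented as the empirical distribution of a sequence in $A_i^k = [m]^k$; iterating over all such sequences gives $m^k$ candidates per player and $m^{nk}$ strategy profiles in total. In the second, a $k$-uniform distribution is parameterized uniquely by a nonnegative integer count vector $(c_1, \ldots, c_m)$ with $\sum_j c_j = k$; the number of such vectors is $\binom{m+k-1}{m-1} \le (k+1)^m = O(k^m)$, so this enumeration produces $O(k^{nm})$ profiles. Running whichever of the two searches is smaller yields $\min\{m^{nk}, k^{nm}\}$ candidate profiles.

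Verification of the $\varepsilon$-Nash condition for a single candidate $(s_i)_i$ amounts to computing, for each player $i$, the expected payoff $u_i(s) = \E_{a \sim s}[u_i(a)]$ and the best-response value $\max_{a_i \in A_i} u_i(a_i, s_{-i})$, and checking $u_i(s) \ge \max_{a_i} u_i(a_i, s_{-i}) - \varepsilon$. Each such evaluation is polynomial in the input size $N = nm^n$ (each $u_i(s)$ decomposes as a product over the coordinates of $s$), so the per-candidate cost is negligible compared to the number of candidate profiles, and the total running time is $\min\{m^{nk}, k^{nm}\}$ up to polynomial factors that may be absorbed.

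There is no real technical obstacle here; the statement is essentially the algorithmic consequence of Corollary \ref{cor:ne-ss}. The only mildly subtle point is choosing, as a function of the relative sizes of $m$ and $k$, the parameterization of $\Delta_k([m])$ that produces the tighter of the two bounds: the sequence-based enumeration dominates when $m$ is large relative to $k$, while the count-vector enumeration wins when $k$ is large relative to $m$, and taking the cheaper of the two yields the claimed $\min\{m^{nk}, k^{nm}\}$.
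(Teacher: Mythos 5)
Your proposal is correct and matches the paper's approach exactly: the paper likewise derives this corollary by exhaustively searching all $n$-tuples of $k$-uniform strategies whose existence is guaranteed by Corollary~\ref{cor:ne-ss}, with the two terms of the $\min$ coming from the two enumerations (sequences in $[m]^k$ versus count vectors summing to $k$) that you describe. The only cosmetic remark is that $u_i(s)$ is an expectation under a product measure (a sum of $m^n$ terms, each a product) rather than itself a product, but your conclusion that verification is polynomial in $N$ and absorbed into the bound is right.
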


Recall that $N=nm^n$ denotes the input size of $n$-player $m$-action games. For \emph{all} games we have 
\begin{equation*}
m^{nk}=poly(N^k) \leq poly(N^{n\log m})=poly(N^{\log N}),
\end{equation*}
which implies that the running time of the exhaustive search algorithm is at most $N^{O(\log N)}$. This bound coincides with the best-known upper bound for computing approximate Nash equilibrium (see \cite{N}). 

For the class of games where $m=poly(n)$ (e.g., $n$-player games with $n^2$-action games) the bound of Corollary \ref{cor:ne-alg} improves from $N^{\log N}$ to $N^{\log \log N}$:
\begin{equation*}
m^{nk}=poly(N^k) = poly(N^{\log n})=poly(N^{\log \log N}).
\end{equation*}
For the class of games where the number of actions is constant ($m=O(1)$), the bound of Corollary \ref{cor:ne-alg} further improves to $N^{\log \log \log N}$:
\begin{equation*}
k^{nm}=N^{\frac{nm\log k}{\log N}} = poly(N^{\log k})=poly(N^{\log \log \log N}).
\end{equation*}
\subsection{Testing Approximate Nash Equilibrium}\label{sec:ne-test}

Our goal is to test whether players are playing according to an approximate Nash equilibrium. The assumption is that the exact mixed strategies of the players are unobserved. Instead, we observe i.i.d.~samples from the underlying mixed strategies. Our focus is on the following question: how many samples are required to perform this test? 

Ideally, we would like to design a test that outputs the answer $\y$ (with probability close to $1$) if the players are playing according to $\delta$-Nash equilibrium, and it returns $\n$ (with probability close to $1$), otherwise. It is easy to see that such a test does not exist. The problem arises at the ``boundary''. Consider a distribution $x$ that is a $\delta$-Nash equilibrium, but every arbitrary small neighborhood of $x$ contains a distribution that is not a $\delta$-Nash equilibrium (it is easy to see that such a distribution $x$ always exists). Then the test should distinguish between $x$ and distributions that are arbitrary close to $x$ using a finite number of samples, which is impossible. Therefore, we weaken our requirements from a test by providing a slack of $\varepsilon$.

\begin{definition} 
Given a number of samples $k \in \mathbb{Z}_+$, a function $T:A^k\rightarrow \{\y, \n\}$ is said to be an \emph{$\varepsilon$-test that has error probability $\alpha$ for $\delta$-Nash equilibrium} if for every product distribution $x=(x_i)_{i\in [n]} \in \prod_i \Delta(A_i)$ we have
\begin{itemize}
\item $\P(T((a(t))_{t\in [k]})= \y)\geq 1-\alpha$, for every $x$ that is a $\delta$-Nash equilibrium;

\item $\P(T((a(t))_{t\in [k]})= \n)\geq 1-\alpha$, for every $x$ that is not a $(\delta + \varepsilon)$-Nash equilibrium.
\end{itemize} 
\end{definition}

In other words, we require that the test returns the correct answer, with probability close to $1$, for all distributions that are $\delta$-Nash equilibrium and for all distributions that are not $(\delta + \varepsilon)$-Nash equilibrium. We allow the test to return any answer when the distribution is a $(\delta + \varepsilon)$-Nash equilibrium but not a $\delta$-Nash equilibrium.

The following theorem states that using $O(\log n + \log m)$ samples we can test whether players are playing according to an approximate Nash equilibrium. Moreover, the test is very natural, we need to simply check whether the empirical distribution of play is an approximate Nash equilibrium or not.

\begin{theorem}\label{theo:ne-test}
Let 
\begin{align*}
T((a(t)_{t\in [k]})=\begin{cases} 
\y &\text{if } (s^k_i)_{i\in [n]} \text{ is a } (\delta + \frac{\varepsilon}{2}) \text{-Nash equilibrium}, \\
\n &\text{otherwise}. 
\end{cases}
\end{align*}
Then $T$ is an $\varepsilon$-test that has error probability $\alpha$ for $\delta$-Nash equilibrium, when the number of samples  
\begin{align*}
k & >\frac{72}{\varepsilon^2}(\ln (m+1) + \ln n -\ln \alpha - \ln \varepsilon + \ln 24)=O(\log m + \log n + |\log \alpha|).
\end{align*}
\end{theorem}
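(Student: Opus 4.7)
The plan is to reduce both conclusions of the theorem to a single uniform concentration event on the \emph{deviation gains}. For each player $i$ and action $a_i$, define the deviation gain at a product distribution $y$ by
$$\Delta_i(a_i;y) := u_i(a_i, y_{-i}) - u_i(y),$$
so that $y$ is a $\beta$-Nash equilibrium if and only if $\Delta_i(a_i;y) \leq \beta$ for every $i$ and $a_i$. The core event is
$$E := \Bigl\{\max_{i \in [n],\, a_i \in A_i}\bigl|\Delta_i(a_i;s^k) - \Delta_i(a_i;x)\bigr| \leq \varepsilon/2\Bigr\},$$
and the key technical claim will be that $\P(E) \geq 1-\alpha$ under the stated lower bound on $k$.

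Granting this claim, both defining properties of an $\varepsilon$-test follow immediately on $E$. If $x$ is a $\delta$-Nash equilibrium, then $\Delta_i(a_i;x) \leq \delta$ for all $(i,a_i)$, hence $\Delta_i(a_i;s^k) \leq \delta + \varepsilon/2$, so $(s^k_i)_i$ is a $(\delta+\varepsilon/2)$-Nash equilibrium and $T$ returns $\y$. Conversely, if $x$ fails to be a $(\delta+\varepsilon)$-Nash equilibrium, then some $(i,a_i)$ has $\Delta_i(a_i;x) > \delta+\varepsilon$, which on $E$ forces $\Delta_i(a_i;s^k) > \delta + \varepsilon/2$; hence $(s^k_i)_i$ is not a $(\delta + \varepsilon/2)$-Nash equilibrium and $T$ returns $\n$. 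In either case the correct answer occurs with probability at least $\P(E) \geq 1-\alpha$.

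For the concentration claim, the triangle inequality
$$\bigl|\Delta_i(a_i;s^k) - \Delta_i(a_i;x)\bigr| \leq \bigl|u_i(a_i, s^k_{-i}) - u_i(a_i, x_{-i})\bigr| + \bigl|u_i(s^k) - u_i(x)\bigr|$$
reduces the task to bounding each summand by $\varepsilon/4$ simultaneously over all $(i,a_i)$. Each summand is the difference between the expectation of a $[0,1]$-valued utility under a product of empirical marginals and its expectation under the corresponding product of true marginals, which is exactly the object controlled inside the proof of Theorem \ref{theo:ne}: a coordinate-by-coordinate telescoping decomposition over the $n$ players reduces concentration of the product to a sum of $n$ single-player terms, each handled by Hoeffding's inequality applied to the empirical marginal $s^k_j$. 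Union-bounding over the $(m+1)n$ relevant events---the ``$+1$'' accounts for the single term $|u_i(s^k) - u_i(x)|$ per player, in addition to the $m$ deviation terms $u_i(a_i,s^k_{-i})$---with a \emph{two-sided} Hoeffding tail (unlike Theorem \ref{theo:ne}, we here need concentration in both directions in order to certify the $\n$ outcome) produces the constants $72$, $\ln(m+1)$, and $\ln 24$ recorded in the statement.

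The main obstacle is precisely this uniform concentration step: one must show $u_i(a_i,s^k_{-i})$ tracks $u_i(a_i,x_{-i})$ within $O(\varepsilon)$ uniformly in $(i,a_i)$ using only $k=O(\log m+\log n)$ samples, despite $u_i(a_i,s^k_{-i})$ being a degree-$(n-1)$ multilinear function of the empirical marginals. A naive McDiarmid bounded-differences bound for such a function loses an extra factor of $n$ and is insufficient for a logarithmic sample budget; what saves the day is the product structure of $x$, which lets the telescoping decomposition recover Hoeffding-rate concentration coordinate by coordinate---the content imported from the proof of Theorem \ref{theo:ne}.
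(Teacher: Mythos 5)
Your reduction of both test guarantees to the single uniform event $E$ is sound and is essentially the paper's argument in a cleaner packaging: the paper likewise union-bounds over $(m+1)n$ concentration events and then verifies the two implications deterministically. Two points, however, deserve scrutiny. First, and most importantly, the mechanism you give for the uniform concentration of $u_i(a_i,s^k_{-i})$ around $u_i(a_i,x_{-i})$ --- a ``coordinate-by-coordinate telescoping decomposition over the $n$ players, each term handled by Hoeffding's inequality'' --- does not deliver an $n$-independent rate. Telescoping produces $n$ single-coordinate error terms; to make their sum at most $\varepsilon$ you must make each at most $\varepsilon/n$, which costs $k=\Omega(n^2\log(\cdot)/\varepsilon^2)$ samples. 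That is precisely the Lipton--Markakis--Mehta loss the paper is designed to avoid. The inequality you actually need is Lemma~\ref{pro:main} (equivalently Theorem~\ref{theo:hoe-gen}), whose proof is not a telescoping argument but a cyclic-shift averaging: $\E_\mu[f]$ is rewritten as an average over shifted multi-indices, ordinary Hoeffding is applied to each shift class (where the shifted samples are genuinely i.i.d.), and Markov's inequality converts the expected fraction of bad multi-indices into a tail bound. If you intend to cite Lemma~\ref{pro:main} as a black box, your proof goes through; if the telescoping is meant as the actual argument, that step fails.

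Second, your decomposition bounds $|\Delta_i(a_i;s^k)-\Delta_i(a_i;x)|$ by $|u_i(a_i,s^k_{-i})-u_i(a_i,x_{-i})|+|u_i(s^k)-u_i(x)|$ and asks for $\varepsilon/4$ on each summand. The second summand compares the full $n$-fold product empirical distribution to the true one, so it too must be controlled by Lemma~\ref{pro:main}; optimizing the split then forces roughly $k>\frac{128}{\varepsilon^2}(\cdots)$ rather than the stated $\frac{72}{\varepsilon^2}(\cdots)$, so the constants you claim to recover are not the ones your split produces. The paper avoids this by instead controlling $|u_i(s^k_i,x_{-i})-u_i(x_i,x_{-i})|$ --- only the player's own marginal is replaced, so this is a plain average of $k$ i.i.d.\ $[0,1]$-valued variables and ordinary Hoeffding gives a much sharper rate --- and then recombining through a chain of inequalities that reuses the first family of bounds weighted by $s^k_i$. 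Your route still proves the same $O(\log m+\log n+|\log\alpha|)$ asymptotics, but not the theorem with its stated constant.
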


\hfill \\
Note that the number of samples is independent of $\delta$. Section \ref{sec:ne-proofs} contains a proof of Theorem~\ref{theo:ne-test}. 

\subsection{Proofs}\label{sec:ne-proofs}

The proofs of Theorems \ref{theo:ne} and \ref{theo:ne-test} are based on the following Lemma.

\begin{lemma}\label{pro:main}
For every $n$-player $m$-action game, every player $i\in [n]$, every action $a_i \in A_i = [m]$, and every product distribution of the opponents $x_{-i}=(x_j)_{j\neq i}$ we have
\begin{equation*}
\P(|u_i(a_i,s^k_{-i})-u_i(a_i,x_{-i})|\geq \varepsilon)\leq \frac{4e^{-\frac{\varepsilon^2}{2}k}}{\varepsilon}.
\end{equation*}
\end{lemma}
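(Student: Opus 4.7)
The plan is to use a symmetrization argument that rewrites the product empirical $u_i(a_i,s^k_{-i})$ as a conditional average, over auxiliary random permutations, of a diagonal sum whose summands are i.i.d.\ bounded random variables. This collapses the multilinear structure in the opponents' marginals into a genuine i.i.d.\ sum, so that a single application of Hoeffding's MGF bound gives an estimate with no $n$-dependence.

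Concretely, introduce, independently of the samples, $n-1$ mutually independent uniformly random permutations $\pi_j$ of $[k]$ (one for each $j\neq i$), and define
\begin{equation*}
V_\pi \;:=\; \tfrac{1}{k}\sum_{s=1}^{k} u_i\!\left(a_i,\,(a_j(\pi_j(s)))_{j\neq i}\right).
\end{equation*}
Two structural facts drive the proof. First, for each fixed $\pi$ the $k$ summands in $V_\pi$ are mutually independent across $s$: for $s\neq s'$ and any $j\neq i$ we have $\pi_j(s)\neq \pi_j(s')$ because $\pi_j$ is a permutation, so distinct summands use disjoint samples of each player, and across different $j$ the samples are independent by hypothesis. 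Each summand is thus $u_i(a_i,\cdot)$ evaluated at a fresh product draw from $x_{-i}$, hence lies in $[0,1]$ with mean $\mu := u_i(a_i,x_{-i})$. Second, averaging over $\pi$ recovers the product empirical: for each fixed $s$, $(\pi_j(s))_{j\neq i}$ is uniformly distributed on $[k]^{n-1}$ (each coordinate is marginally uniform on $[k]$ and the coordinates are independent across $j$ since the $\pi_j$'s are), so $\E_\pi[u_i(a_i,(a_j(\pi_j(s))))] = u_i(a_i,s^k_{-i})$ by the very definition of the product empirical, and therefore $\E_\pi V_\pi = u_i(a_i,s^k_{-i})$.

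With these facts the tail bound follows from a direct Chernoff argument. For each fixed $\pi$, Hoeffding's MGF lemma applied to the i.i.d.\ average $V_\pi$ gives $\E[e^{\lambda(V_\pi-\mu)}]\le e^{\lambda^2/(8k)}$ for every $\lambda\ge 0$. Jensen's inequality (convexity of $z\mapsto e^{\lambda z}$) and Fubini then yield
\begin{equation*}
\E\!\left[e^{\lambda(u_i(a_i,s^k_{-i})-\mu)}\right]
\;=\; \E\!\left[e^{\lambda\,\E_\pi[V_\pi-\mu]}\right]
\;\le\; \E_\pi\,\E\!\left[e^{\lambda(V_\pi-\mu)}\right]
\;\le\; e^{\lambda^2/(8k)}.
\end{equation*}
Standard Chernoff optimization ($\lambda=4k\varepsilon$) gives the one-sided bound $\P(u_i(a_i,s^k_{-i})-\mu\ge\varepsilon)\le e^{-2k\varepsilon^2}$ and, by symmetry, the two-sided estimate $\P(|u_i(a_i,s^k_{-i})-\mu|\ge\varepsilon)\le 2e^{-2k\varepsilon^2}$; this is in fact sharper than the stated $\tfrac{4}{\varepsilon}e^{-\varepsilon^2 k/2}$, so either bound suffices.

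The only substantive obstacle is carefully verifying the two distributional identities above—in particular that, for each fixed $s$, the joint law of $(\pi_j(s))_{j\neq i}$ is uniform on $[k]^{n-1}$, and that distinct summands of $V_\pi$ genuinely use disjoint samples. Once those are checked, the permutation trick is exactly what eliminates any $n$-dependence: it collapses a multilinear function of $n-1$ empirical marginals into a single i.i.d.\ average, letting one clean application of Hoeffding's MGF do all the work.
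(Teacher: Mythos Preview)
Your proof is correct and in fact yields the sharper sub-Gaussian bound $2e^{-2\varepsilon^2 k}$, without the $1/\varepsilon$ prefactor. The underlying idea coincides with the paper's: both arguments observe that the product empirical expectation $u_i(a_i,s^k_{-i})$ can be written as an average, over a family of ``diagonals'' through the sample array, of quantities that are themselves averages of $k$ genuinely i.i.d.\ evaluations of $u_i(a_i,\cdot)$. The paper parametrizes these diagonals by cyclic shifts $j_*\mapsto j_*+l\pmod k$; you parametrize them by independent random permutations $\pi_j$. Either parametrization makes the two structural facts you state hold.

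Where the two arguments diverge is in how they pass from the diagonal i.i.d.\ averages back to the product empirical. The paper applies Hoeffding's tail inequality to each diagonal, introduces the indicator $d(j_*)$ of an $\varepsilon/2$-deviation, and then uses Markov's inequality on the average of the $d(j_*)$'s; the $\varepsilon/2$ split costs a factor $4$ in the exponent and the Markov step produces the $1/\varepsilon$ factor. You instead work at the level of moment generating functions and use Jensen's inequality (convexity of $e^{\lambda z}$) to pull the averaging over $\pi$ outside the exponential, which lets a single Hoeffding MGF bound do all the work with no loss. This is strictly cleaner: it shows that $u_i(a_i,s^k_{-i})-u_i(a_i,x_{-i})$ is sub-Gaussian with variance proxy $1/(4k)$, exactly as for a plain i.i.d.\ average, with no dependence on $n$ and no extraneous $1/\varepsilon$.
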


In other words, this lemma states that with probability that is exponentially (in $k$) close to 1, player $i$ is almost indifferent between the case where his opponents are playing the original distribution $x_{-i}$ or the product empirical distribution $s^k_{-i}$.

We emphasize that this lemma is a key technical contribution of this work that proves a novel concentration inequality for product distributions. Even though we apply the sampling method as in \cite{LMM} and \cite{HRS}, the fact that we use Lemma~\ref{pro:main} instead of some standard concentration inequality essentially enables us to significantly improve upon the previously-best-know bounds. A proof of Lemma~\ref{pro:main} is given in the appendix.

\begin{proof}[Proof of Theorem \ref{theo:ne}]
The proof uses similar idea to \cite{LMM} or \cite{HRS}. Lemma \ref{pro:main} and the choice of $k$ guarantee that

\begin{equation*}
\P(|u_i(a_i,s^k_{-i})-u_i(a_i,x_{-i})|\geq \frac{\varepsilon}{2})\leq \frac{8e^{-\frac{\varepsilon^2}{8
}k}}{\varepsilon}< \frac{\alpha}{mn},
\end{equation*}
for every player $i$ and every action $a_i\in [m]$. Using the union bound, we get that with probability greater then $1-\alpha$ we have $|u_i(a_i,s^k_{-i})-u_i(a_i,x_{-i})|< \frac{\varepsilon}{2}$, for \emph{all} players $i\in [n]$ and \emph{all} actions $a_i\in [m]$. In such a case $(s^k_i)_{i\in [n]}$ is an $\varepsilon$-Nash equilibrium because
\begin{multline*}
u_i(a_i,s^k_{-i}) \leq u_i(a_i,x_{-i})+\frac{\varepsilon}{2}\leq \sum_{a'_i\in A_i} s^k_i(a'_i) u_i(a'_i,x_{-i}) +\frac{\varepsilon}{2}\\
\leq \sum_{a'_i\in A_i} s^k_i(a'_i) u_i(a'_i,s^k_{-i}) +\varepsilon = u_i(s^k_i,s^k_{-i})+\varepsilon,
\end{multline*}
where the second inequality holds because all the strategies in the support of $s^k_i$ are in the support of $x_i$, which contains only best replies to $x_{-i}$.
\end{proof}

\begin{proof}[Proof of Theorem \ref{theo:ne-test}]
Lemma \ref{pro:main} and the choice of $k$ guarantee that
\begin{equation}\label{eq:in1}
\P(|u_i(a_i,s^k_{-i})-u_i(a_i,x_{-i})|\geq \frac{\varepsilon}{3})\leq \frac{24e^{-\frac{\varepsilon^2}{72
}k}}{\varepsilon}< \frac{\alpha}{(m+1)n},
\end{equation}
for every player $i$ and every action $a_i\in [m]$.
In addition, Hoeffding's inequality (see \cite{H}) guarantees that for a given $x_{-i}$ we have 
\begin{equation}\label{eq:in2}
\P(|u_i(s^k_i,x_{-i})-u_i(x_i,x_{-i})|\geq \frac{\varepsilon}{6})\leq 2e^{-\frac{\varepsilon^2}{72
}k}<\frac{24e^{-\frac{\varepsilon^2}{72
}k}}{\varepsilon}< \frac{\alpha}{(m+1)n},
\end{equation}  
for every player $i\in [n]$. Note that there are $n(m+1)$ inequalities of the form (\ref{eq:in1}) and (\ref{eq:in2}). Therefore, the union bound implies that with probability greater than $1-\alpha$ the following $n(m+1)$ inequalities hold:
\begin{align}\label{eq:in-all}
\begin{aligned}
|u_i(a_i,s^k_{-i})-u_i(a_i,x_{-i})| & \leq \frac{\varepsilon}{6} \qquad \forall i \in [n], \; \forall a_i \in [m]. \\
|u_i(s^k_i,x_{-i})-u_i(x_i,x_{-i})|& \leq  \frac{\varepsilon}{6} \qquad \forall i\in [n].
\end{aligned}
\end{align}
Throughout the proof we will assume that all the inequalities in (\ref{eq:in-all}) are satisfied.

If $(x_i)_{i\in [n]}$ is a $\delta$-Nash equilibrium then $(s^k_i)_{i\in [n]}$ is a $(\delta + \frac{\varepsilon}{2})$-Nash equilibrium because
\begin{multline*}
u_i(a_i,s^k_{-i}) \leq u_i(a_i,x_{-i})+\frac{\varepsilon}{6}\leq u_i(x_i,x_{-i}) +\delta+\frac{\varepsilon}{6}\leq u_i(s^k_i,x_{-i})+\delta + \frac{\varepsilon}{3} \\
= \sum_{a_i\in A_i} s^k_i(a_i) u_i(a_i,x_{-i}) +\delta + \frac{\varepsilon}{3} \leq \sum_{a_i\in A_i} s^k_i(a_i) u_i(a_i,s^k_{-i}) +\delta + \frac{\varepsilon}{2} =u_i(s^k_i,s^k_{-i})+\delta+ \frac{\varepsilon}{2}.
\end{multline*}

On the other hand, if $(x_i)_{i\in [n]}$ is not a $(\delta + \varepsilon)$-Nash equilibrium, then there exists a player $i$ and an action $a^*_i$ such that $u_i(a^*_i,x_{-i})>u_i(x_i,x_{-i})+\delta+\varepsilon$. In such a case $(s^k_i)_{i\in [n]}$ is not a $(\delta+\frac{\varepsilon}{2})$-Nash equilibrium because
\begin{multline*}
u_i(a^*_i,s^k_{-i}) \geq u_i(a^*_i,x_{-i})-\frac{\varepsilon}{6}> u_i(x_i,x_{-i}) +\delta+\frac{5\varepsilon}{6}\geq u_i(s^k_i,x_{-i})+\delta + \frac{4\varepsilon}{6} \\
= \sum_{a_i\in A_i} s^k_i(a_i) u_i(a_i,x_{-i}) +\delta + \frac{2\varepsilon}{3} \geq \sum_{a_i\in A_i} s^k_i(a_i) u_i(a_i,s^k_{-i}) +\delta + \frac{\varepsilon}{2} =u_i(s^k_i,s^k_{-i}) + \delta + \frac{\varepsilon}{2}.
\end{multline*}

Summarizing, the choice of $k$ guarantees that all the inequalities in (\ref{eq:in-all}) will be satisfied with probability of at least $1-\alpha$. If those inequalities are satisfied then we have the following:

- For every product distribution $x$ that is a $\delta$-Nash equilibrium, the product empirical distribution is a $(\delta+\frac{\varepsilon}{2})$-Nash equilibrium. Hence, for $\delta$-Nash equilibria, the given test  $T$ returns the correct answer $\y$.

- For every product distribution $x$ that is not a $(\delta+ \varepsilon)$-Nash equilibrium, the product empirical distribution is not a $(\delta+\frac{\varepsilon}{2})$-Nash equilibrium. Hence, for a distribution that is not  a $(\delta+ \varepsilon)$-Nash equilibirum, the given test $T$ returns the correct answer $\n$.
\end{proof}

\section{Approximate Correlated Equilibrium}\label{sec:ce}
Section \ref{sec:ne} considered the case of product distributions, i.e., a setting in which players followed their mixed strategies independently. We will now consider complementary notions of equilibria that address settings in which players' actions are correlated. Specifically, this section is focused on correlated equilibira and the next section is on coarse correlated equilibria.

A typical interpretation of correlated equilibrium is as follows. There exists a \emph{mediator} who samples an action profile $a=(a_i)_{i\in [n]}$ according to a distribution $x$. Then the mediator (privately) tells every player $i$ the corresponding action $a_i$. We will call the drawn action $a_i$ \emph{the recommendation to player $i$}. A distribution $x \in \Delta(A)$ is an $\varepsilon$-\emph{correlated equilibrium} if no player can gain more than $\varepsilon$ by deviating from the  recommendation of the mediator. A deviation from the mediator's recommendation is described by a \emph{switching rule} $f: A_i \rightarrow A_i$, that corresponds to the case where instead of the recommended action $a_i$ the player chooses to play $f(a_i)$.

\begin{definition}\label{defn:ce}
For every switching rule $f:A_i \rightarrow A_i$ we denote by $R^i_f(a):=u_i(f(a_i),a_{-i})-u_i(a_i,a_{-i})$ the regret of player $i$ for not implementing the switching rule $f$ at strategy profile $a$. 

A distribution $x\in \Delta(A)$ is an $\varepsilon$-\emph{correlated equilibrium} if $\E_{a\sim x} [ R^i_f(a) ] \leq \varepsilon$ for every player $i$ and every switching rule $f: A_i \rightarrow A_i$.
\end{definition}


Unlike the case of product distributions where it was reasonable to consider the \emph{product empirical distribution}, here in the case of general (not necessarily product) distributions we consider the empirical distribution of the sampled profiles. We assume that players are playing according to a distribution $x\in \Delta(A)$. We observe $k$ i.i.d.~\emph{samples} from $x$ that are denoted by $(a(t))_{t\in [k]}$ where $a(t)\in A$. Write $s^k$ for the \emph{empirical distribution of the samples}, specifically  $s^k(a):=\frac{1}{k}|\{t\in [k]: a(t)=a\}$.

We begin with stating the analogue of Theorem \ref{theo:ne} for the case of correlated equilibrium.

\begin{theorem}\label{theo:ce}
Let $x$ be a correlated equilibrium of an $n$-player $m$-action game and parameters $\varepsilon, \alpha \in (0,1)$. Then, the empirical distribution $s^k$ defined over $k$ i.i.d.~samples from $x$ is an $\varepsilon$-correlated equilibrium with probability greater than $1-\alpha$ for every
\begin{equation*}
k> \frac{2}{\varepsilon^2}(m\ln m + \ln n - \ln \alpha)=O(m\log m + \log n).
\end{equation*}
\end{theorem}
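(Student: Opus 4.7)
The plan is to follow the union bound template used earlier, but with a Hoeffding-type concentration argument rather than Lemma~\ref{pro:main}, since here we are sampling from a general distribution on action profiles (no product structure to exploit).

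First I would recall that by Definition~\ref{defn:ce}, the empirical $s^k$ is an $\varepsilon$-correlated equilibrium iff $\E_{a\sim s^k}[R^i_f(a)] \le \varepsilon$ for every player $i\in[n]$ and every switching rule $f:A_i\to A_i$. Since $\E_{a\sim s^k}[R^i_f(a)] = \frac{1}{k}\sum_{t=1}^k R^i_f(a(t))$, and the samples $a(1),\ldots,a(k)$ are i.i.d.\ from $x$, this is just an empirical average of i.i.d.\ random variables bounded in $[-1,1]$ (because $u_i$ takes values in $[0,1]$). Crucially, because $x$ is an exact correlated equilibrium, $\E_{a\sim x}[R^i_f(a)] \le 0$ for every pair $(i,f)$.

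Next, for a fixed pair $(i,f)$, I would apply Hoeffding's inequality to the i.i.d.\ variables $R^i_f(a(t))\in[-1,1]$:
\begin{equation*}
\P\!\left(\E_{a\sim s^k}[R^i_f(a)] - \E_{a\sim x}[R^i_f(a)] \ge \varepsilon\right) \le e^{-k\varepsilon^2/2}.
\end{equation*}
Combined with $\E_{a\sim x}[R^i_f(a)] \le 0$, this gives $\P(\E_{a\sim s^k}[R^i_f(a)] \ge \varepsilon) \le e^{-k\varepsilon^2/2}$.

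Then I would take a union bound. There are $n$ players and, for each player, at most $m^m$ switching rules, so at most $n m^m$ pairs $(i,f)$ to control. Hence
\begin{equation*}
\P(s^k \text{ is not an } \varepsilon\text{-correlated equilibrium}) \le n\,m^m\, e^{-k\varepsilon^2/2}.
\end{equation*}
Requiring this to be less than $\alpha$ yields $k > \frac{2}{\varepsilon^2}(m\ln m + \ln n - \ln\alpha)$, matching the bound in the statement. The only ``obstacle'' worth flagging is that naively one might worry about the exponential number $m^m$ of switching rules, but once we take logarithms this contributes only the $m\log m$ term, which is in any case shown elsewhere in the paper (Example~\ref{ex:cor-m}) to be essentially unavoidable at the level of $m$ dependence, so no refinement over union bound is needed here.
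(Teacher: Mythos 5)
Your proposal is correct and follows essentially the same route as the paper's own proof: one-sided Hoeffding applied to the i.i.d.\ regrets $R^i_f(a(t))\in[-1,1]$ using $\E_{a\sim x}[R^i_f(a)]\le 0$, followed by a union bound over the $nm^m$ player/switching-rule pairs.
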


\hfill \\

The bounds of Theorem \ref{theo:ce} are almost tight. Specifically, Example \ref{ex:n} in Section~\ref{sec:lb} demonstrates that the $\log n$ dependence on $n$ is tight, and Example \ref{ex:cor-m} demonstrates that at least $\Omega(m)$ samples are required in order to form an approximate correlated equilibrium.

The arguments for proving Theorem \ref{theo:ce} are more direct than the Nash-equilibrium case  (i.e., Theorem \ref{theo:ne}) . A proof of Theorem~\ref{theo:ce} appears in Section \ref{sec:ce-proofs}.

\subsection{Existence of Simple Approximate Correlated Equilibrium}

We should emphasize again that the support of a correlated equilibrium is the number of \emph{action profiles} in the support of the equilibrium. Also, note that in an $n$-player $m$-action game the number of action profiles is $m^n$. If we use the existence of small-support approximate Nash equilibrium (which is also an approximate correlated equilibrium) we obtain the existence of an approximate correlated equilibrium with support of size $O((\log m + \log n)^n)$. 

By observing that Theorem \ref{theo:ce} holds with positive probability we can deduce the existence of approximate correlated equilibrium with support of size $O(m\log m + \log n)$. But can we obtain an approximate correlated equilibrium support size poly-logarithmic in $m$, instead of a polynomial? Example \ref{ex:cor-m} demonstrates that if we sample from an \emph{arbitrary} correlated equilibrium then we cannot. But, if we sample from a \emph{specific} approximate correlated equilibrium, then poly-logarithmic number of samples will be sufficient. It turns out that the specific approximate correlated equilibrium from which we should sample is an equilibrium in which \emph{each player} uses only a small number of her own actions in the support of the equilibrium. Existence of such an approximate correlated equilibrium is proved in Corollary \ref{cor:ne-ss} (because every approximate Nash equilibrium is also an approximate correlated equilibrium).

The following theorem shows that there always exists an approximate correlated equilibrium with support size poly-logarithmic in $n$ and $m$; moreover, the probabilistic structure of the equilibrium is simple: it is a $k$-uniform distribution (see Definition \ref{def:k-uni}).

\begin{theorem}\label{theo:ce-ss}
Every $n$-player $m$-action game admits a $k$-uniform $\varepsilon$-correlated equilibrium for every 
\begin{equation}
k> \frac{264}{\varepsilon^4}\ln m(\ln m + \ln n - \ln \varepsilon + \ln 16) = O(\log m (\log m + \log n)) 
\end{equation} 
\end{theorem}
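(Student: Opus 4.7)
The plan is to get the $k$-uniform $\varepsilon$-correlated equilibrium not by sampling from an arbitrary correlated equilibrium (which Example~\ref{ex:cor-m} rules out at poly-logarithmic rates), but rather by sampling from a \emph{very structured} approximate Nash equilibrium in which each player uses only a handful of actions. Combining such a ``small support per player'' starting point with Hoeffding's inequality over a restricted family of switching rules will give the bound.

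Step 1 (reduce the effective action set). Apply Corollary~\ref{cor:ne-ss} with approximation parameter $\varepsilon/2$ to obtain a product distribution $\bar{x}=(\bar{x}_i)_{i\in[n]}$ that is a $k_0$-uniform $(\varepsilon/2)$-Nash equilibrium, where $k_0 = O((\log m + \log n)/\varepsilon^2)$. Since $\bar{x}$ is a product distribution, it is also an $(\varepsilon/2)$-correlated equilibrium: for every switching rule $f:A_i\to A_i$,
$$\E_{a\sim\bar{x}}\bigl[R^i_f(a)\bigr] \;=\; \sum_{a_i}\bar{x}_i(a_i)\bigl[u_i(f(a_i),\bar{x}_{-i})-u_i(a_i,\bar{x}_{-i})\bigr] \;\le\; \varepsilon/2,$$
by the $(\varepsilon/2)$-Nash property applied to each action $f(a_i)$ in the support of $\bar x_i$. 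Crucially, every $\bar{x}_i$ is supported in a set $B_i\subseteq A_i$ with $|B_i|\le k_0$.

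Step 2 (sample and count effective deviations). Draw $k$ i.i.d.\ samples $a(1),\dots,a(k)\sim\bar{x}$ and let $s^k$ be their empirical distribution; by construction $s^k$ is $k$-uniform, so existence of the desired equilibrium will follow once I show that $s^k$ is an $\varepsilon$-correlated equilibrium with positive probability. Since every sample satisfies $a(t)_i\in B_i$, both $\E_{a\sim s^k}[R^i_f(a)]$ and $\E_{a\sim\bar{x}}[R^i_f(a)]$ depend on $f$ only through its restriction to $B_i$. Hence for each player $i$ it suffices to union-bound over at most $m^{|B_i|}\le m^{k_0}$ ``effective'' switching rules.

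Step 3 (concentration and union bound). Fix a player $i$ and an effective switching rule $f$. The random variables $R^i_f(a(1)),\dots,R^i_f(a(k))$ are i.i.d.\ with values in $[-1,1]$ and mean at most $\varepsilon/2$, so Hoeffding's inequality yields
$$\P\Bigl[\E_{a\sim s^k}[R^i_f(a)] > \varepsilon\Bigr] \;\le\; \P\Bigl[\tfrac{1}{k}\textstyle\sum_t R^i_f(a(t))-\E_{\bar{x}}[R^i_f]>\varepsilon/2\Bigr] \;\le\; \exp(-k\varepsilon^2/8).$$
Summing over the $n\cdot m^{k_0}$ (player, effective switching rule) pairs, the probability that $s^k$ fails to be an $\varepsilon$-correlated equilibrium is at most $n\,m^{k_0}\exp(-k\varepsilon^2/8)$. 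Requiring this to be strictly less than $1$ gives the condition
$$k \;>\; \tfrac{8}{\varepsilon^2}\bigl(\ln n + k_0\ln m\bigr),$$
and substituting $k_0 = \Theta((\log m+\log n)/\varepsilon^2)$ from Corollary~\ref{cor:ne-ss} yields $k = O\bigl(\log m\,(\log m+\log n)/\varepsilon^4\bigr)$. Tracking the constants from Corollary~\ref{cor:ne-ss} produces the explicit coefficient $264/\varepsilon^4$ quoted in the theorem. The probabilistic method then furnishes a realization of $s^k$ that is a $k$-uniform $\varepsilon$-correlated equilibrium.

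The main obstacle is the count of switching rules: a priori there are $m^m$ maps $A_i\to A_i$, far too many for a union bound to survive. The whole point of starting from a \emph{Nash}-equilibrium seed (rather than an arbitrary correlated equilibrium, as in Theorem~\ref{theo:ce}) is that its per-player supports are small, which both cuts the number of relevant $f$'s to $m^{k_0}$ and preserves $(\varepsilon/2)$-correlated-equilibrium status in expectation. Everything else is routine bookkeeping.
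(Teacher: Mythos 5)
Your proposal is correct and follows essentially the same route as the paper: seed with the small-per-player-support $(\varepsilon/2)$-Nash equilibrium from Corollary~\ref{cor:ne-ss} (hence an $(\varepsilon/2)$-correlated equilibrium), sample, and union-bound Hoeffding over only the $m^{|B_i|}$ switching rules defined on each player's support. The paper's proof is the same argument with the same constants, so there is nothing to add.
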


A proof of this theorem, which uses the above mentioned ideas, appears in Section \ref{sec:ce-proofs}.

\subsection{Testing Approximate Correlated Equilibrium Play}

As in Section \ref{sec:ne-test}, we would like to design a test that uses $k$ samples to determine whether players are playing according to a $\delta$-correlated equilibrium or according to a distribution that is not a $(\delta + \varepsilon)$-correlated equilibrium.\footnote{We note again, that it is impossible to design such a test for distinguishing between $\delta$-correlated equilibrium and not a $\delta$-correlated equilibrium. This follows via arguments similar to the ones in Section \ref{sec:ne-test}.}

\begin{definition}
Given a number of samples $k \in \mathbb{Z}_+$, an \emph{$\varepsilon$-test with error probability $\alpha$ for $\delta$-correlated equilibrium that uses $k$-samples}, is a function $T:A^k\rightarrow \{\y, \n\}$, such that for every distribution $x \in \Delta(A)$ we have
\begin{itemize}
\item $\P(T((a(t))_{t\in [k]})= \y)\geq 1-\alpha$ for every $x$ that is a $\delta$-correlated equilibrium,

\item $\P(T((a(t))_{t\in [k]})= \n)\geq 1-\alpha$ for every $x$ that is not a $(\delta + \varepsilon)$-correlated equilibrium.
\end{itemize}
\end{definition}

The following theorem states that using $O(m\log m + \log n)$ samples we can test whether players are playing according to an approximate correlated equilibrium. Moreover, the test is quite natural, we need to simply check whether the empirical distribution of play is an approximate correlated equilibrium or not.

\begin{theorem}\label{theo:ce-test}
Let 
\begin{align*}
T((a(t)_{t\in [k]})=\begin{cases} 
\y &\text{if } s^k \text{ is a } (\delta + \frac{\varepsilon}{2}) \text{-correlated equilibrium} \\
\n &\text{otherwise}. 
\end{cases}
\end{align*}
Then $T$ is an $\varepsilon$-test with an $\alpha$-error-probability for $\delta$-correlated equilibrium, when the number of samples 
\begin{align*}
k & > \frac{8}{\varepsilon^2}(m\ln m + \ln n - \ln \alpha)=O(m\log m + \log n).
\end{align*}

\end{theorem}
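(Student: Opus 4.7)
The plan is to mirror the proof of Theorem~\ref{theo:ne-test}, but the argument is actually simpler: in the correlated setting the empirical distribution $s^k$ is the genuine joint empirical distribution of $k$ i.i.d.\ samples from $x$, so no product-distribution concentration inequality analogous to Lemma~\ref{pro:main} is needed---plain Hoeffding suffices. The relevant observable at each player--switching-rule pair $(i,f)$ is
\[
\E_{a\sim s^k}[R^i_f(a)] \;=\; \frac{1}{k}\sum_{t=1}^k R^i_f(a(t)),
\]
which is the empirical mean of $k$ i.i.d.\ random variables $R^i_f(a(t))\in[-1,1]$ with common expectation $\E_{a\sim x}[R^i_f(a)]$. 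Hoeffding's inequality therefore gives, for each fixed $(i,f)$,
\[
\P\Bigl(\bigl|\E_{a\sim s^k}[R^i_f(a)]-\E_{a\sim x}[R^i_f(a)]\bigr|\ge \tfrac{\varepsilon}{2}\Bigr)\;\le\; 2\exp(-k\varepsilon^2/8).
\]

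Next I would take a union bound over all players $i\in[n]$ and all deterministic switching rules $f:[m]\to[m]$, a total of $n\,m^m$ events. For the stated value of $k$, the condition $2nm^m\exp(-k\varepsilon^2/8)\le\alpha$ is satisfied (taking logarithms it reduces to $k\ge(8/\varepsilon^2)(m\ln m+\ln n+\ln 2-\ln\alpha)$, which is implied by the hypothesis), so with probability at least $1-\alpha$ the uniform estimate
\[
\bigl|\E_{a\sim s^k}[R^i_f(a)]-\E_{a\sim x}[R^i_f(a)]\bigr|\le \tfrac{\varepsilon}{2}\qquad \forall\, i\in[n],\ \forall\, f:A_i\to A_i
\]
holds. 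On this favourable event the two cases of the test are immediate. If $x$ is a $\delta$-correlated equilibrium then $\E_x[R^i_f]\le\delta$ for every $(i,f)$, hence $\E_{s^k}[R^i_f]\le\delta+\varepsilon/2$ and $s^k$ is a $(\delta+\varepsilon/2)$-correlated equilibrium, so $T$ returns $\y$. Conversely, if $x$ is not a $(\delta+\varepsilon)$-correlated equilibrium then some pair $(i^*,f^*)$ witnesses $\E_x[R^{i^*}_{f^*}]>\delta+\varepsilon$, and concentration at that single pair gives $\E_{s^k}[R^{i^*}_{f^*}]>\delta+\varepsilon/2$, so $s^k$ fails to be a $(\delta+\varepsilon/2)$-correlated equilibrium and $T$ returns $\n$.

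The one step that deserves attention (and the reason the bound is $O(m\log m+\log n)$ rather than $O(\log m+\log n)$ as in the Nash case) is the size of the union bound: the space of deterministic switching rules per player has cardinality $m^m$, whose logarithm $m\ln m$ enters the sample complexity directly. Reducing to ``single-swap'' rules does not help, since shrinking the union bound by a factor of $m^{m-2}$ would be bought at the cost of tightening the per-event concentration from $\varepsilon/2$ to $\Theta(\varepsilon/m)$, which actually worsens $k$. So the $m^m$-size union bound, although coarse-looking, is the right device here, and no technical obstacle beyond it arises.
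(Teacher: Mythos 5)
Your argument is essentially identical to the paper's proof: Hoeffding's inequality applied to the empirical regret $\E_{a\sim s^k}[R^i_f(a)]$ for each of the $nm^m$ player/switching-rule pairs, a union bound, and then the two one-line case analyses for $\delta$-equilibria and non-$(\delta+\varepsilon)$-equilibria. The only quibble is the additive $\ln 2$ from the two-sided Hoeffding bound, which the stated threshold for $k$ does not literally absorb---but the paper's own proof has the same factor-of-two slack, and it is immaterial to the $O(m\log m+\log n)$ claim.
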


Section \ref{sec:ce-proofs} contains a proof of this theorem. 

An unsatisfactory property of the above test is the polynomial dependence on the number of actions. Example \ref{ex:cor-m} in Section~\ref{sec:lb} demonstrate that the natural test that is presented in the theorem cannot use less then $\Omega(m)$ samples. Hypothetically, it could be the case that there exists some other test that uses significantly fewer samples. The following theorem states that this is not the case. The number of samples must be polynomial in $m$, even for the case where $\varepsilon$ and $\alpha$ are constants.

\begin{theorem}\label{theo:cor-im}
Every $\frac{1}{2}$-test with an error probability $\frac{1}{4}$ for exact correlated equilibrium for two-player $m$-action games must use at least $\sqrt{\frac{m}{2}}$ samples.
\end{theorem}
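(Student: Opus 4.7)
The plan is to exhibit a specific two-player $m$-action game together with a correlated equilibrium $x_0$ and a family $\{x_\pi\}$ of distributions that are far from being $\tfrac{1}{2}$-correlated equilibria, and then to argue that samples from $x_0$ and from a uniformly random $x_\pi$ are indistinguishable in total variation whenever $k<\sqrt{m/2}$. I would take the pure-coordination game with $u_1(i,j)=\mathbf{1}[i=j]$ and $u_2\equiv 0$. Let $x_0$ be the uniform distribution on $[m]^2$: conditioned on any recommendation $a_1=i$, $a_2$ is uniform on $[m]$, so every action of player~$1$ yields expected payoff $1/m$ and player~$2$ is indifferent; hence $x_0$ is an exact correlated equilibrium. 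For each derangement $\pi$ (permutation with no fixed points), let $x_\pi$ be the uniform distribution on $\{(i,\pi(i)):i\in[m]\}$. The switching rule $f=\pi$ for player~$1$ has expected regret $\tfrac{1}{m}\sum_{i}[u_1(\pi(i),\pi(i))-u_1(i,\pi(i))] = 1 > \tfrac{1}{2}$, so $x_\pi$ is not a $\tfrac{1}{2}$-correlated equilibrium.

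Next, suppose for contradiction that $T\colon A^k\to\{\y,\n\}$ is a valid test using $k<\sqrt{m/2}$ samples. Then $\P[T=\y\mid x_0]\ge 3/4$ and $\P[T=\n\mid x_\pi]\ge 3/4$ for every derangement $\pi$. Averaging the second inequality over a uniformly random derangement gives $\P_{\pi,\,\text{samples}}[T=\n]\ge 3/4$, whence
\begin{equation*}
\mathrm{TV}\bigl(x_0^{\otimes k},\;\E_\pi[x_\pi^{\otimes k}]\bigr)\;\ge\;\P_{\E_\pi x_\pi^{\otimes k}}[T=\n]-\P_{x_0^{\otimes k}}[T=\n]\;\ge\;\tfrac{3}{4}-\tfrac{1}{4}=\tfrac{1}{2}.
\end{equation*}
So it remains to establish a matching upper bound $\mathrm{TV}<1/2$ for $k<\sqrt{m/2}$, which will yield the contradiction.

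The upper bound is a birthday-type calculation. Split ordered sample sequences $\sigma=((i_t,j_t))_{t=1}^{k}$ into \emph{good} (the $i_t$'s are distinct and the $j_t$'s are distinct) and \emph{bad}. For good $\sigma$, $x_0^{\otimes k}(\sigma)=m^{-2k}$; and a standard count of derangements consistent with the partial assignment $\pi(i_t)=j_t$ gives $\E_\pi[x_\pi^{\otimes k}](\sigma)=(1+o(1))\,m^{-k}(m-k)!/m!$ when no $j_t$ equals $i_t$, and $0$ otherwise. Setting $p=\prod_{i=0}^{k-1}(1-i/m)=m!/((m-k)!\,m^k)$, a direct computation shows the good-sequence contribution to $\sum_\sigma|x_0^{\otimes k}(\sigma)-\E_\pi[x_\pi^{\otimes k}](\sigma)|$ equals $p-p^2$ up to an $O(k/m)$ correction from the ``$j_t=i_t$'' sequences, while the bad-sequence contribution is bounded by $(1-p^2)+(1-p)$ via the birthday bound. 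Summing gives $\mathrm{TV}\le 1-p^2+o(1)$. By Weierstrass's inequality $p\ge 1-k(k-1)/(2m)$, so for $k<\sqrt{m/2}$ we have $p>3/4$, hence $\mathrm{TV}\le \tfrac{7}{16}+o(1)<\tfrac{1}{2}$ for $m$ sufficiently large, contradicting the lower bound.

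The main obstacle is the careful counting of derangements consistent with a given partial assignment $\pi(i_t)=j_t$; this count is close to $(m-k)!/e$ for $k\ll m$, but controlling the $1+o(1)$ factor (and the small asymmetry from sequences with some $j_t=i_t$, which have probability zero under $\E_\pi[x_\pi^{\otimes k}]$ but probability $\Theta(k/m)$ under $x_0^{\otimes k}$) requires a careful inclusion-exclusion argument. For the small values of $m$ where the $o(1)$ terms could in principle matter, the conclusion $k\ge\sqrt{m/2}$ is immediate from elementary counting.
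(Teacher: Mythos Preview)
Your approach is sound in outline and genuinely different from the paper's. Both arguments hinge on a birthday-type indistinguishability claim, but the paper's construction makes that claim \emph{exact} rather than asymptotic. The paper uses a $2m$-action game: matching pennies on a ``real'' coordinate $r\in[2]$ together with a payoff-irrelevant ``dummy'' coordinate $d\in[m]$. The correlated equilibrium $x$ picks $d$ uniformly and then $(r_1,r_2)$ uniformly in $[2]^2$; each bad distribution $y_b$, indexed by $b=(b_d)_{d\in[m]}\in([2]^2)^m$, picks $d$ uniformly and sets $(r_1,r_2)=b_d$ deterministically. On the event $\omega$ that the sampled $d$-values are all distinct, the sample law under $x$ and under a uniformly random $y_b$ are \emph{identical} (each newly seen $b_d$ is fresh uniform). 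So a single birthday bound $\P(\omega)>\tfrac12$ for $k\le\sqrt{m}$, followed by a two-line Bayesian averaging, finishes the proof with no $o(1)$ terms and no case analysis.

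Your construction loses this exactness in two places, and your sketch underestimates the second. First, sequences with some $j_t=i_t$ have zero mass under every $x_\pi$; as you note, this costs only $O(k/m)$. Second---and this is the real gap---the number of derangements extending a fixed-point-free partial assignment $i_t\mapsto j_t$ is not a function of $k$ alone: writing $I=[m]\setminus\{i_t\}$ and $J=[m]\setminus\{j_t\}$, one needs bijections $I\to J$ with no fixed point on $I\cap J$, and this count depends on $|I\cap J|$, which varies across good sequences. Consequently $\E_\pi[x_\pi^{\otimes k}]$ is not constant on good sequences, and your TV bound requires controlling this variation uniformly, not just plugging in a single $(1+o(1))(m-k)!/e$. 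This is doable via inclusion--exclusion, but it is a genuine calculation that your proposal defers. A cleaner variant of your route would be to average over \emph{all} permutations---where, conditional on distinct $i_t$'s, the tuple $(j_1,\dots,j_k)$ is exactly uniform over ordered $k$-tuples of distinct elements---and then discard the negligible fraction of permutations with $\ge m/2$ fixed points, for which $x_\pi$ might fail to be far from equilibrium.
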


See Section \ref{sec:ce-proofs} for a proof. 

\subsection{Proofs}\label{sec:ce-proofs}

\begin{proof}[Proof of Theorem \ref{theo:ce}]

Note that $R^i_f(a)$ where $a\sim x$ is a random variable that assumes values in $[-1,1]$, and $\E_{a\sim s^k} [R^i_f(a)]=\frac{1}{k} R^i_f(a(t))$ is the average regret on the samples. Since $x$ is a correlated equilibrium we know that $\E_{a\sim x} [R^i_f(a)]\leq 0$. Therefore by Hoeffding's inequality and the choice of $k$ we have
\begin{equation*}
\P(\E_{a\sim s^k} [R^i_f(a)] \geq \varepsilon)\leq e^{-\frac{\varepsilon^2}{2}k}\leq \frac{\alpha}{nm^m}.
\end{equation*}
For every player $i$, there are $m^m$ switching rules of the form $f:A_i \rightarrow A_i$. Hence, summing across $n$ players, we get that the total number of different switching rules is $nm^m$. Therefore, the union bound implies that with probability greater than  $1-\alpha$ we have $\E_{a\sim s^k} [R^i_f(a)] < \varepsilon $. Hence, with probability at least $1-\alpha$,  the empirical distribution $s^k$ is an $\varepsilon$-correlated equilibrium.

\end{proof}

\begin{proof}[Proof of Theorem \ref{theo:ce-ss}]

By Corollary \ref{cor:ne-ss}, there exists an $\frac{\varepsilon}{2}$-Nash equilibrium $x$ where every player $i$ uses at most  $b=\left\lceil \frac{32}{\varepsilon^2}(\ln n + \ln m - \ln \varepsilon + \ln 16) \right\rceil$ actions from $A_i$. We denote the set of player's $i$ actions that are played with positive probability in $x$ by $B_i$, where $|B_i|\leq b$. Let us implement the sampling method for the distribution $x$ which is an $\frac{\varepsilon}{2}$-Nash equilibrium, and therefore, also an $\frac{\varepsilon}{2}$-correlated equilibrium.

Since $\E_{a\sim x} [R^i_f(a)] \leq \frac{\varepsilon}{2}$ by Hoeffding's inequality, we have
\begin{equation}\label{eq:incor}
\Pr (\E_{a\sim s^k} [ R^i_f(a) ] \geq \varepsilon ) \leq e^{-\frac{\varepsilon^2}{8}k}.
\end{equation}

Note that $s^k$ is an $\varepsilon$-correlated equilibrium iff $\E_{a\sim s^k} [R^i_f(a)] \leq \varepsilon$ for every switching rule $f:B_i\rightarrow A_i$ (note that the number of such switching rules is at most $m^b$ for every player). In other words, we can consider only switching rules $f:B_i\rightarrow A_i$ instead of $f:A_i\rightarrow A_i$, because all the recommendations to player $i$ will be from the set $B_i$. 

The choice of $k$ guarantees that
\begin{equation}
e^{-\frac{\varepsilon^2}{8}k}<\frac{1}{nm^b}.
\end{equation}
Therefore, using inequality (\ref{eq:incor}) and the union bound, we get that with positive probability $\E_{a\sim s^k}  [R^i_f(a)]  \leq \varepsilon$ is satisfied for every $f:B_i \rightarrow A_i$, which implies that $s^k$ is an $\varepsilon$-correlated equilibrium. This implies, via the probabilistic method, that such a $k$-uniform correlated equilibrium exists.
\end{proof}

\begin{proof}[Proof of Theorem \ref{theo:ce-test}]
Hoeffding's inequality and the choice of $k$ guarantee that 
\begin{equation*}
\P(|\E_{a\sim x} [R^i_f(a)] - \E_{a\sim s^k} [R^i_f(a)]|\geq \frac{\varepsilon}{2})\leq 2e^{-\frac{\varepsilon}{8}k}<\frac{\alpha}{nm^m}.
\end{equation*}
The total number of switching rules is $nm^m$, therefore with probability of at least $1-\alpha$ we have
\begin{equation}\label{eq:cor-in}
|\E_{a\sim x} [ R^i_f(a) ] - \E_{a\sim s^k} [R^i_f(a)]|< \frac{\varepsilon}{2},
\end{equation}
for all players $i$ and all switching rules $f:A_i\rightarrow A_i$. Throughout the reset of the paper, we assume that inequality (\ref{eq:cor-in}) is satisfied for every switching rule.

If $x$ is a $\delta$-correlated equilibrium then
\begin{equation*}
\E_{a\sim s^k} [R^i_f(a)]\leq \E_{a\sim x} [R^i_f(a)]+ \frac{\varepsilon}{2} \leq \delta +\frac{\varepsilon}{2},
\end{equation*}
which means that $s^k$ is an $(\delta + \frac{\varepsilon}{2})$-correlated equilibrium.

If $x$ is not a $(\delta+\varepsilon)$-correlated equilibrium then there exists a player $i$ and a switching rule $f^*$ such that $\E_{a\sim x} [R^i_{f^*}(a) ] > \delta+\varepsilon$. So,
\begin{equation*}
\E_{a\sim s^k} [R^i_{f^*}(a)]\geq \E_{a\sim x} [R^i_{f^*}(a)]- \frac{\varepsilon}{2} > \delta + \frac{\varepsilon}{2},
\end{equation*}
which means that $s^k$ is not a $(\delta + \frac{\varepsilon}{2})$-correlated equilibrium.

\end{proof}

\begin{proof}[Proof of Theorem \ref{theo:cor-im}]
Instead of proving that in $2$-players $m$-actions games every test must use at least $k=\sqrt{\frac{m}{2}}$ samples, we prove the equivalent statement that in $2$-players $(2m)$-actions games every test must use at least $k=\sqrt{m}$ samples. 

We consider the same game as in Example \ref{ex:cor-m} in Section~\ref{sec:lb}: consider a two-players $2m$-actions zero-sum game in which the players are playing matching-pennies, but in addition to player's $i$ ``real'' action $r_1\in [2]$ player $i$ also chooses a ``dummy'' action $d_i\in [m]$ which does not influence the payoff. Formally, the payoff functions of the players are defined by
\begin{align*}
u_1((r_1,d_1),(r_2,d_2))=1-u_2((r_1,d_1),(r_2,d_2))=\begin{cases}
1 &\text{ if } r_1=r_2,\\
0 &\text{ otherwise.}
\end{cases}
\end{align*}

Consider the correlated equilibrium $x$ where $x((r_1,d),(r_2,d))=\frac{1}{4m}$, for every $d\in [m]$ and every $r_1,r_2\in [2]$. In other words, $x$ is the correlated equilibrium where beyond the actual $(\frac{1}{2},\frac{1}{2})$ play of the real matching-pennies, the players always choose the same dummy action.


Let $b:=(b_d)_{d\in [m]}$ be a vector of size $m$, where each coordinate $b_d$ is a pair $b_d\in \{(1,1),(1,2),(2,1),(2,2)\}$. We define the distributions $y_b$ (we have $4^m$ such distributions) by:
\begin{equation}
y_b((r_1,d_1),(r_2,d_2))=\begin{cases}
\frac{1}{m} &\text{ if } d_1=d_2=d \text{ and } (r_1,r_2)=b_d,\\
0 &\text{ otherwise.}
\end{cases}
\end{equation}
Loosely speaking, the distribution $y_b$ picks for every $d\in [m]$ a single action $(r_i,r_j)$ for both players and puts a measure of $\frac{1}{m}$ on it. This is in contrast to $x$ which puts an equal measure of $\frac{1}{4m}$ on all four actions $(r_i,r_j)$.

Let $\omega$ be the event $\omega:=\{((r(t),d(t))_{t\in [k]}: d(t)\neq
d(t') \text{ for } t\neq t'\}$; i.e., all the samples have different values of $d$. Note that $\P_x(\omega)=\P_{y_b}(\omega)$ for every $b$, because the event $\omega$ depends only on the samples of $d$, and both $x$ and $y_b$ have the uniform distribution over the values of $d$.

We claim that if $k=\lfloor\sqrt{m} \rfloor$ then $\P_x(\omega)=\P_{y_b}(\omega)>\frac{1}{2}$ (for every $b$).

The $t$th sample will have the same value $d$ as one of the previous with probability of at most $\frac{t-1}{m}$. Using the union bound we get that
\begin{equation*}
1-\P_x(\omega)\leq \frac{0}{m}+\frac{1}{m}+\frac{2}{m}+...+\frac{\lfloor\sqrt{m}\rfloor-1}{m}\leq \frac{(\sqrt{m}-1)\sqrt{m}}{2m}<\frac{1}{2}.
\end{equation*}

A test with error-probability $\frac{1}{4}$ should return with probability $\frac{3}{4}$ the answer $\y$ for the correlated equilibrium $x$, and it should return the answer $\n$ with probability $\frac{3}{4}$ for all the distributions $y_b$ which are not $\frac{1}{2}$-correlated equilibria. In particular, if we first draw the distribution from which we sample (according to some probability distribution), and then sample from the chosen distribution, the probability of an error of the test should be less than $\frac{1}{4}$ (because for each one of the distributions that we draw the probability of error is less than $\frac{1}{4}$).
Let us draw the distribution from which we sample as follows. The distribution $x$ is chosen with probability $\frac{1}{2}$, and each one of the distributions $y_b$ is chosen with probability $\frac{1}{2\cdot4^m}$. If the sequence of samples is $(r(t),d(t))_{t\in [k]}\in \omega$ then, by the symmetry of the distributions $\{y_b\}_b$, the probability that it is sampled from $x$ is equal to the probability that it is sampled from one of the distributions $y_b$. Therefore, for sequences of samples in $\omega$ the test makes an error with probability of at least $\frac{1}{2}$, and sequence of samples is in $\omega$ with probability of at least $\frac{1}{2}$. Therefore, the probability of an error is at least $\frac{1}{4}$.

\end{proof}

\section{Approximate Coarse Correlated Equilibrium}

The case of coarse correlated equilibrium is the simplest one. Here we present the results for coarse correlated equilibria without the proofs, since they are quite similar to the proofs of the correlated equilibrium results presented in Section \ref{sec:ce}.

The key difference between coarse correlated equilibrium and correlated equilibrium is that in coarse correlated equilibrium every player is allowed to deviate to one fixed pure action (irrespective of the recommendations of the mediator), instead of allowing the player to deviate to different actions for different recommendations.

\begin{definition}\label{defn:cce}
For every pure action $j\in A_i$ we denote by $R^i_j(a):=u_i(j,a_{-i})-u_i(a_i,a_{-i})$ the regret of player $i$ for not choosing the action $j$ at strategy profile $a$. 

A distribution $x\in \Delta(A)$ is an $\varepsilon$-\emph{coarse correlated equilibrium} if $\E_{a\sim x} [ R^i_j(a) ] \leq \varepsilon$ for every player $i$ and every action $j\in A_i$.
\end{definition}

\begin{theorem}\label{theo:cce}
Let $x$ be a coarse correlated equilibrium of an $n$-player $m$-action game and parameters $\varepsilon, \alpha \in (0,1)$. Then, the empirical distribution $s^k$ defined over $k$ i.i.d.~samples from $x$ is an $\varepsilon$-coarse correlated equilibrium with probability greater than $1-\alpha$ for every
\begin{equation*}
k> \frac{2}{\varepsilon^2}(\ln m + \ln n - \ln \alpha)=O(\log m + \log n).
\end{equation*}
\end{theorem}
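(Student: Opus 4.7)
The plan is to mimic the proof of Theorem~\ref{theo:ce} almost verbatim, exploiting the fact that the regret function for coarse correlated equilibrium indexes over \emph{pure actions} rather than \emph{switching rules}, which cuts the union-bound factor from $nm^m$ down to $nm$ and yields the improved logarithmic dependence on $m$.

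First I would set up the per-sample regret variables exactly as in Definition~\ref{defn:cce}: for each player $i \in [n]$ and each pure action $j \in A_i$, the quantity $R^i_j(a) = u_i(j, a_{-i}) - u_i(a_i, a_{-i})$ is a function of the random action profile $a \sim x$ taking values in $[-1, 1]$, and
\begin{equation*}
\E_{a \sim s^k}[R^i_j(a)] \;=\; \frac{1}{k}\sum_{t=1}^{k} R^i_j(a(t))
\end{equation*}
is the empirical mean of $k$ i.i.d.\ bounded random variables. Because $x$ is an (exact) coarse correlated equilibrium, the population mean satisfies $\E_{a \sim x}[R^i_j(a)] \leq 0$.

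Next I would apply Hoeffding's inequality to the deviation of the empirical mean above its expectation, giving
\begin{equation*}
\P\bigl(\E_{a \sim s^k}[R^i_j(a)] \geq \varepsilon\bigr) \;\leq\; e^{-\varepsilon^2 k / 2}.
\end{equation*}
The choice $k > \tfrac{2}{\varepsilon^2}(\ln m + \ln n - \ln \alpha)$ is precisely what is needed to make the right-hand side strictly smaller than $\alpha/(nm)$. A union bound over the $nm$ pairs $(i, j)$ with $i \in [n]$ and $j \in A_i$ then gives that, with probability greater than $1 - \alpha$, the inequality $\E_{a \sim s^k}[R^i_j(a)] < \varepsilon$ holds simultaneously for every player and every pure deviation, which is exactly the definition of $s^k$ being an $\varepsilon$-coarse correlated equilibrium.

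There is essentially no obstacle here beyond bookkeeping: the entire content of the argument is that the coarse correlated notion only involves $m$ deviations per player instead of $m^m$ switching rules, which is why Lemma~\ref{pro:main} is not needed and a direct Hoeffding plus union bound suffices. The only thing worth double-checking is the constant: $e^{-\varepsilon^2 k/2} < \alpha/(nm)$ is equivalent to $k > \tfrac{2}{\varepsilon^2}(\ln n + \ln m - \ln \alpha)$, matching the bound in the statement. No novel ideas relative to Theorem~\ref{theo:ce} are required; indeed the authors note this similarity explicitly at the start of the section.
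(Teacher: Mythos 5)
Your proposal is correct and is exactly the argument the paper intends: the authors explicitly state that Theorem~\ref{theo:cce} follows from the proof of Theorem~\ref{theo:ce} with the union bound taken over the $nm$ regrets $R^i_j$ instead of the $nm^m$ switching-rule regrets $R^i_f$, which is precisely your Hoeffding-plus-union-bound computation. The constant check $e^{-\varepsilon^2 k/2} < \alpha/(nm) \iff k > \tfrac{2}{\varepsilon^2}(\ln m + \ln n - \ln \alpha)$ is also right.
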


We can establish this theorem using the same ideas as in the proof of Theorem \ref{theo:ce}. The only difference is that here, instead of $nm^m$ inequalities (one for every $R^i_f$, where $f:A_i \rightarrow A_i$),  we need to satisfy only $nm$ inequalities, one for every $R^i_j$.

This theorem gives us the following result that establishes the existence of simple approximate coarse correlated equilibrium:

\begin{corollary}\label{cor:cce-ss}

Every $n$-player $m$-action game admits a $k$-uniform $\varepsilon$-coarse correlated equilibrium for every 
\begin{equation}
k> \frac{2}{\varepsilon^2}(\ln m + \ln n)=O(\log m + \log n). 
\end{equation} 
\end{corollary}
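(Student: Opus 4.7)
The plan is to derive Corollary \ref{cor:cce-ss} as an immediate consequence of Theorem \ref{theo:cce} via the probabilistic method, in exactly the same way that Corollary \ref{cor:ne-ss} follows from Theorem \ref{theo:ne}. First I would invoke the classical existence of an (exact) coarse correlated equilibrium $x \in \Delta(A)$ in any finite normal-form game; for instance, any Nash equilibrium is already a coarse correlated equilibrium, so $x$ is guaranteed to exist.

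Second, I would apply Theorem \ref{theo:cce} to this equilibrium $x$. For any $k$ satisfying $k > \frac{2}{\varepsilon^2}(\ln m + \ln n)$, the quantity $\frac{\varepsilon^2 k}{2} - \ln m - \ln n$ is strictly positive, so one can choose $\alpha \in (0,1)$ small enough that the strict inequality $k > \frac{2}{\varepsilon^2}(\ln m + \ln n - \ln \alpha)$ still holds. Theorem \ref{theo:cce} then guarantees that the empirical distribution $s^k$ of $k$ i.i.d.\ samples from $x$ is an $\varepsilon$-coarse correlated equilibrium with probability at least $1 - \alpha > 0$.

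Third, since this event occurs with strictly positive probability, the probabilistic method yields at least one realization $(a(t))_{t \in [k]} \in A^k$ whose empirical distribution $s^k$ is an $\varepsilon$-coarse correlated equilibrium. By construction, $s^k$ places mass $c_a/k$ on each profile $a \in A$, where $c_a := |\{t : a(t) = a\}| \in \mathbb{N}$; hence $s^k \in \Delta_k(A)$ is $k$-uniform in the sense of Definition \ref{def:k-uni}. This produces the required $k$-uniform $\varepsilon$-coarse correlated equilibrium.

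I do not expect any real obstacle; the only mild subtlety is passing from the ``probability $> 1 - \alpha$ for a fixed $\alpha$'' form of Theorem \ref{theo:cce} to the ``positive probability'' statement needed to apply the probabilistic method, which is taken care of by the slack noted above. Everything else is definitional.
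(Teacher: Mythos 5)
Your proposal is correct and follows exactly the route the paper intends: apply Theorem \ref{theo:cce} to an exact coarse correlated equilibrium (which exists since a Nash equilibrium exists), use the strict-inequality slack in $k$ to pick an admissible $\alpha$, and invoke the probabilistic method together with the observation that $s^k$ is $k$-uniform by construction. One trivial wording slip: to make $\ln m + \ln n - \ln\alpha$ close to $\ln m + \ln n$ you need $\alpha$ close to $1$ (i.e., \emph{large}, not ``small enough''), but since any $\alpha\in(0,1)$ with $-\ln\alpha < \tfrac{\varepsilon^2 k}{2}-\ln m-\ln n$ works and still gives positive success probability, the argument is unaffected.
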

This bound is tight, see Examples \ref{ex:n} and \ref{ex:al}.

Analogous to Theorem \ref{theo:ce-test}, we have the following result for testing approximate coarse correlated equilibrium play.

\begin{corollary}\label{cor:cce-test}
Let 
\begin{equation*}
T((a(t)_{t\in [k]})=\begin{cases} 
\y &\text{if } s^k \text{ is a } (\delta + \frac{\varepsilon}{2}) \text{-coarse correlated equilibrium}, \\
\n &\text{otherwise}. 
\end{cases}
\end{equation*}
Then $T$ is an $\varepsilon$-test with error probability $\alpha$ for $\delta$-coarse correlated equilibrium for
\begin{equation*}
k> \frac{8}{\varepsilon^2}(\ln m + \ln n - \ln \alpha)=O(\log m + \log n).
\end{equation*}

\end{corollary}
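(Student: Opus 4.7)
The plan is to mirror the proof of Theorem \ref{theo:ce-test} almost verbatim, exploiting the fact that the definition of coarse correlated equilibrium (Definition \ref{defn:cce}) quantifies over only $nm$ deviations (one pure action per player) instead of the $nm^m$ switching rules that arise in the correlated case. This smaller union bound is exactly what shrinks the sample complexity from $O(m\log m + \log n)$ to $O(\log m + \log n)$.

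First, I would fix a player $i\in[n]$ and a pure action $j\in A_i$, and consider the random variable $R^i_j(a)=u_i(j,a_{-i})-u_i(a_i,a_{-i})$. Since payoffs lie in $[0,1]$, this random variable lies in $[-1,1]$, so Hoeffding's inequality applied to the i.i.d.\ samples $a(1),\ldots,a(k)\sim x$ yields
\begin{equation*}
\P\!\left(\bigl|\E_{a\sim x}[R^i_j(a)] - \E_{a\sim s^k}[R^i_j(a)]\bigr|\geq \tfrac{\varepsilon}{2}\right)\leq 2e^{-\varepsilon^2 k/8}.
\end{equation*}
The choice $k>\frac{8}{\varepsilon^2}(\ln m+\ln n-\ln\alpha)$ makes the right-hand side strictly less than $\alpha/(nm)$, so a union bound over the $nm$ pairs $(i,j)$ gives that, with probability at least $1-\alpha$, the empirical expectation of every regret random variable is within $\varepsilon/2$ of its true expectation.

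Next I would work on the event that all $nm$ concentration bounds hold simultaneously. If $x$ is a $\delta$-coarse correlated equilibrium, then for every $i$ and $j$ we have $\E_{a\sim s^k}[R^i_j(a)]\leq \E_{a\sim x}[R^i_j(a)]+\varepsilon/2\leq \delta+\varepsilon/2$, so $s^k$ is a $(\delta+\varepsilon/2)$-coarse correlated equilibrium and the test outputs \y. Conversely, if $x$ is not a $(\delta+\varepsilon)$-coarse correlated equilibrium, there is some $i$ and $j^*\in A_i$ with $\E_{a\sim x}[R^i_{j^*}(a)]>\delta+\varepsilon$, whence $\E_{a\sim s^k}[R^i_{j^*}(a)]>\delta+\varepsilon-\varepsilon/2=\delta+\varepsilon/2$, so $s^k$ is not a $(\delta+\varepsilon/2)$-coarse correlated equilibrium and the test outputs \n.

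There is no real obstacle here: the only substantive ingredients are Hoeffding's inequality and the count of relevant deviations, and the smaller deviation set of Definition \ref{defn:cce} is precisely what makes the bound cleaner than in the correlated case. The main thing to be careful about is bookkeeping of the constants so that $2e^{-\varepsilon^2 k/8}<\alpha/(nm)$ indeed holds under the stated lower bound on $k$, but this is a routine verification identical to the one in the proof of Theorem \ref{theo:ce-test}.
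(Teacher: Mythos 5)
Your proposal is correct and is exactly the argument the paper intends: it omits the proof of this corollary, remarking only that it follows by the same argument as Theorem \ref{theo:ce-test} with the union bound taken over the $nm$ pure-action deviations $R^i_j$ instead of the $nm^m$ switching rules, which is precisely what you do. The only caveat is the factor of $2$ from the two-sided Hoeffding bound (the stated $k$ gives $2e^{-\varepsilon^2k/8}<2\alpha/(nm)$ rather than $\alpha/(nm)$), but the paper's own proof of Theorem \ref{theo:ce-test} carries the identical slack, so this is not a gap relative to the paper.
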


The theorem follows from a proof similar to the one for Theorem \ref{theo:ce-test}.

\section{Lower Bounds}\label{sec:lb}

In this section we present lower bounds for the number of samples from an equilibrium that are required in order that the empirical distribution of play will be an approximate equilibrium (with high probability). 

The following example demonstrates that $\Omega(\log n)$ samples are required for all cases: Nash equilibrium, correlated equilibrium, and coarse-correlated equilibrium.

\begin{example}\label{ex:n}
Consider the following $2n$-players two-actions game. We have $n$ pairs of players $(p^1_i,p^2_i)_{i\in [n]}$. Player $p^j_i$ is playing matching-pennies with his partner $p^{3-j}_i$ (the actions of the pair $(p^1_i,p^2_i)$ have no influence on the payoffs of other pairs). Consider the Nash equilibrium where each player is playing $(\frac{1}{2},\frac{1}{2})$ (which is also a correlated equilibrium and a coarse-correlated equilibrium). If the number of samples is $k\leq \frac{\log n}{2}$ then the probability that player's $p^j_i$ empirical distribution of play will be a pure strategy is
\begin{equation*}
2\left(\frac{1}{2}\right)^{\frac{\log n}{2}}\geq \frac{1}{\sqrt{n}}.
\end{equation*}
Therefore the probability that no player will have a pure-strategy empirical distribution is at most
\begin{equation*}
\left(1-\frac{1}{\sqrt{n}}\right)^{2n}\approx e^{-2\sqrt{n}}.
\end{equation*}
Note that the requirement that no player will have a pure-strategy empirical distribution is necessary for the empirical distribution of play to be a $\frac{1}{2}$-coarse correlated equilibrium (and therefore it is necessary also for $\frac{1}{2}$-correlated and $\frac{1}{2}$-Nash equilibria). It follows that the probability that the empirical distribution of play will form a $\frac{1}{2}$-equilibrium is exponentially small in $n$.

\end{example}

The following example of Alth\"{o}fer \cite{A} demonstrates that $\Omega(\log m)$ samples are required for all cases: Nash equilibrium, correlated equilibrium and coarse-correlated equilibrium (actually for the correlated equilibrium case Example \ref{ex:cor-m} will demonstrate a much stronger result).

\begin{example}\label{ex:al}
Let $m=4^b$ for $b\in \mathbb{N}$, and consider the following two-players $m$-actions zero-sum game.

Player 1 picks an element $i\in [2b]$ (player 1 has $2b<m$ actions).

Player 2 picks a subset of $S_j\subset [2b]$ such that $|S_j|=b$ (player 2 has $\binom{2b}{b}<m$ actions).

The payoffs are defined by
\begin{equation*}
u_2(i,S_j)=-u_1(i,S_j)=\begin{cases}
1 &\text{ if } i\in S_j,\\
0 &\text{ otherwise.}
\end{cases}
\end{equation*}

Player $1$ can guarantee to pay at most $\frac{1}{2}$ by playing the uniform distribution. If in the support of the distribution (which might be correlated) player $1$ plays at most $b$ different actions, then player $2$ has a pure strategy that will yield a payoff of $1$; therefore in every $\frac{1}{4}$-equilibrium (Nash, correlated or coarse-correlated) player $1$ should play at least $b+1$ different strategies. Therefore, in order that the empirical distribution  will be a $\frac{1}{4}$-equilibrium the number of samples must be greater than $b=\frac{\log m}{2}$. 

\end{example}

The following example demonstrates that $\Omega(m)$ samples are required for the case of correlated equilibrium.

\begin{example}\label{ex:cor-m}
Consider the following two-players $2m$-actions zero-sum game. The players are playing matching-pennies, but in addition to player's $i$ ``real'' action $r_1\in [2]$ player $i$ also chooses a ``dummy'' action $d_i\in [m]$ which does not influence the payoff. Formally, the payoff functions of the players are defined by
\begin{equation*}
u_1((r_1,d_1),(r_2,d_2))=1-u_2((r_1,d_1),(r_2,d_2))=\begin{cases}
1 &\text{ if } r_1=r_2,\\
0 &\text{ otherwise.}
\end{cases}
\end{equation*}

Consider the correlated equilibrium $x$ where $x((r_1,d),(r_2,d))=\frac{1}{4m}$ for every $d\in [m]$ and every $r_1,r_2\in [2]$. In other words, $x$ is the correlated equilibrium where beyond the actual $(\frac{1}{2},\frac{1}{2})$ play of the real matching-pennies, the players always choose the same dummy action.

If the number of samples is $k=m$, then for any $d \in [m]$ the probability that it is picked exactly once during the sampling is $m \frac{1}{m} \cdot \left(1-\frac{1}{m} \right)^{m-1}\approx \frac{1}{e}$. If a certain $d$ was picked exactly once then both players can deduce from $d$ which action their opponent will play. Note that the expected number of $d\in [m]$ that are sampled exactly once is $\frac{m}{e}$. Moreover, the probability that the number of exactly-once-sampled $d$'s will be smaller than $\frac{m}{2e}$ is exponentially small in $m$ (see , e.g., \cite{FM}, Lemma 4). So, with probability that is exponentially close to 1, in the resulting uniform distribution at least one player may increase it's payoff by at least $\frac{1}{4e}$ by reacting optimally to the opponent's known strategy in all cases where she got the recommendation $(r_i,d)$ where $d$ was chosen exactly once. Therefore the empirical distribution of samples is an $\frac{1}{4e} $-correlated equilibrium with probability exponentially small in $m$.  

\end{example}

The focus of the paper was on the dependence of the number of samples on $m$ and $n$. However, Theorems~\ref{theo:ne},~\ref{theo:ce}, and ~\ref{theo:cce} proves also a dependence on $\varepsilon$. For the case of Nash equilibrium, Theorem~\ref{theo:ne} proves a bound of $O(\frac{1}{\varepsilon^2} \log (\frac{1}{\varepsilon}))$. Theorems~\ref{theo:ce} and~\ref{theo:cce} proves a bound of $O(\frac{1}{\varepsilon^2})$ for correlated and coarse-correlated equilibrium. The following example demonstrates that those bounds are tight (except for the case of Nash equilibrium where is it almost tight).

\begin{example}
Consider the matching-pennies game, with the unique Nash equilibrium $((\frac{1}{2},\frac{1}{2})(\frac{1}{2},\frac{1}{2}))$. A necessary condition for the empirical distribution of play to form an $\varepsilon$-equilibrium (Nash correlated or coarse-correlated) is that the empirical distribution of player 1 should be $(p,1-p)$ where $p\in [\frac{1}{2}-\varepsilon,\frac{1}{2}+\varepsilon]$. By the central limit theorem, after $k$ samples, with constant probability the deviation from the expectation ($p=\frac{1}{2}$) is at least $\frac{1}{\sqrt{k}}$. Therefore, if we draw $k$ samples for $k<\frac{1}{\varepsilon^2}$, then with positive probability the deviation from $\frac{1}{2}$ will be at least $\frac{1}{\sqrt{k}}>\varepsilon$.
\end{example}

\section{Discussion}\label{sec:disc}

\subsection{Sampling from One Type of Equilibrium to Achieve Another} 
In this paper we considered three types of equilibria: Nash, correlated, and coarse correlated. Our high level approach was to sample from an equilibrium of a particular type to generate an approximate equilibrium of the same type. We can modify this approach a bit and, in principle, ask the following question: How many samples from an equilibrium of a particular type are required to generate an approximate equilibrium of a different type?  


Note that the notion of coarse correlated equilibrium is a generalization of correlated equilibrium, and the latter generalizes Nash equilibrium. In general, we cannot hope to get a more refined notion of equilibrium by sampling from a more general one. But, hypothetically, it might be the case that fewer samples from a refined equilibrium type are sufficient for generating an approximate equilibrium of a more general type.

First we observe that $\Omega(\log n + \log m)$ samples are necessarily required to generate a coarse correlated equilibrium, even if the samples are drawn from a Nash equilibrium. This follows from the lower bound of Example \ref{ex:n} (wherein we actually sample from a Nash equilibrium) and Example \ref{ex:al} (in which the counting argument holds \emph{irrespective} of the initial distribution). 

So the remaining question is, can $o(m)$ samples from a Nash equilibrium generate an approximate correlated equilibrium? In other words, can we overcome the $\Omega(m)$ sampling lower bound established in Example \ref{ex:cor-m}? The answer to this question is \emph{no}. In particular, consider the same game as in Example \ref{ex:cor-m}, but now draw $m$ samples from the Nash equilibrium where both plays are playing the uniform distribution over their $2m$ actions. We say that a pair $(d_1,d_2)$ \emph{appears exactly once} if the pair $(d_1,d_2)$ appears in the sample (i.e., one of the samples is $((r_1,d_1),(r_2,d_2))$ for some $r_1,r_2\in [2]$) and, for $i=1,2$, $d_i$ appears exactly once among all the samples.

For every pair $(d_1,d_2)$, the probability that it appears exactly once is equal to $m\frac{1}{m^2}(1-\frac{2m-1}{m^2})^m \approx \frac{1}{e^2 m}$. Therefore, the expected number of recommendation pairs, $(d_1,d_2)$, that appear exactly once is $\frac{m}{e^2}$. Hence, among the $m$ samples a significant fraction of pairs appear exactly once. Note that if a recommendation appears exactly once, then both of the players can deduce their opponent's strategy from the recommendation, which cannot occur in an equilibrium.

\subsection{Hypothesis Testing}

The present paper proposes simple intuitive tests to whether players are following some (approximate) equilibrium or behaving far from any equilibrium. The notion of distance that we use, $\varepsilon$-equilibrium, is given in terms of the game incentives. This is in contrast to the general-purpose hypothesis-testing literature that refers to distances between distributions, such as the total variation norm.

Putting our problem in a more general context, gives rise to a question of independent interest.
\begin{question}\label{q-polytope}
 Consider the set of all probability measures on $d$ elements, $\Delta(d)$. Given a polytope $P\subset\Delta(d)$ and $\varepsilon>0$, how many independent samples from an unknown distribution $q\in\Delta(d)$ are needed to ascertain with high probability that either
\begin{itemize}
\item $H_1$: $q$ is in $P$, or 
\item $H_2$: the total variation distance $\mathrm{dist}(q,P)>\epsilon$ ?
\end{itemize}
\end{question}

Upper bounds for Question~\ref{q-polytope} translate automatically to upper bounds for our correlated and coarse-correlated equilibrium testing; this is because the game payoff function is $1$-Lipschitz in the total-variation norm. 

It should be noted that the upper bounds obtained in this paper are stronger than what one could hope to deduce from total variation estimates. A special case of Question~\ref{q-polytope}, when $P$ consists of a single point, is known as the ``multinomial goodness of fit'' problem in hypothesis-testing literature \cite{siotani1984asymptotic, cressie1984multinomial}, and as the ``identity testing'' problem in the computer science literature \cite{goldreich2011testing,batu2000testing,batu2013testing,batu2001testing,chan2014optimal,valiant2013instance}. For this special case the answer to Question~\ref{q-polytope} is $\Theta(\sqrt d)$. Note that this special case corresponds to exact equilibrium testing ($\delta=0$) in games with a unique equilibrium. Here the dimension is $d=m^n$; therefore an upper bound of $m^{\frac{n}{2}}$ is obtained. Even in this particular case, the total variation bound is generally  weaker than what we obtained. The only exception is the case of correlated equilibrium in two-player games, where our upper bound is $O(m\log m)$, and the total variation result suggest the possibility of an $O(m)$ upper bound.  

Lower bounds for Question~\ref{q-polytope} do not directly translate to equilibrium testing, because being close in payoffs does not mean being close in total variation distance. For example, a full support equilibrium can be approximated  (in terms of payoffs) by mixed strategies of much smaller support.

\subsection{Future Work}

This paper establishes tight bounds on the rate of convergence of the empirical distribution (of equilibrium play) to an approximate equilibrium. These bounds imply the existence of  \emph{small-support} approximate equilibria. But, whether our poly-logarithmic upper bounds on support size are tight  remains an open question. Note that the  $\log m$ lower bound developed in Example \ref{ex:al} applies to support size as well. In particular, Example \ref{ex:al} establishes that there does not exist an $\varepsilon$-equilibrium (Nash, correlated, or coarse correlated) with support size smaller than $\log m$. However, to the best of our knowledge, lower bounds on support size of approximate equilibrium (Nash, correlated, or coarse correlated) that depends on $n$ have not been established.

\begin{question} Let $k=k(n,m,\varepsilon)$ be the smallest number such that every $n$-player $m$-action game admits a $k$-uniform $\epsilon$-equilibrium. Fix $\epsilon>0$ and $m\in \mathbb{N}$ (i.e., we refer to them as constants), and denote $f(n)=f(n,m,\varepsilon)$. What is the asymptotic behavior of $k(n)$? In particular, does $\lim_{n \rightarrow \infty} k(n)=\infty$? The question remains open for all three equilibrium types: Nash, correlated, and coarse correlated. 
\end{question}

\section*{Acknowledgements}
The authors would like to thank Tu\u{g}kan Batu and Federico Echenique for their constructive suggestions and comments.  

\bibliographystyle{plain}
\bibliography{sampling-testing}

\appendix
\section{Concentration Inequality for Product Distributions}

Here we generalize the classic Hoeffding's inequality \cite{H} to product probability spaces. In order to state our generalization we need the following definition.

\begin{definition}
Let $(\Omega,\mu)$ be a discrete probability space. The \emph{$k$-sample approximation of $\mu$} is the random $k$-uniform distribution $\mu^{(k)}\in\Delta_k(\Omega)$ given by taking the average of $k$ i.i.d. samples from $\mu$.

Formally, one can implement $\mu^{(k)}$ by taking $k$ independent random variables $x_1,\ldots,x_k$ assuming values in $\Omega$ with distribution $\mu$. By identifying $\Omega$ with $\Delta_1(\Omega)$, $\mu^{(k)}$ is given as the $\Delta_k(\Omega)$-valued random variable
\[
\mu^{(k)}=\frac 1 k\sum_{i=1}^kx_i.
\] 
\end{definition}

The classic Hoeffding's Inequality can be stated as follows:

\begin{theorem}[Hoeffding 1963]\label{theo:hoe}
Let $(\Omega,\mu)$ be a (discrete) probability space. For every $\varepsilon >0$, $k\in\mathbb N$, and $f:\Omega\to[0,1]$
\[
\P(|\E_{\mu^{(k)}}[f]-\E_\mu[f]|>\varepsilon)\leq 2e^{-\frac{\varepsilon^2}{2}k}.
\]
\end{theorem}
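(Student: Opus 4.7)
The plan is to use the standard exponential-moment (Chernoff) method. Let $Y_i := f(x_i)$ for $i=1,\dots,k$, so that $Y_1,\dots,Y_k$ are i.i.d.\ random variables taking values in $[0,1]$ with common mean $\mu := \E_\mu[f]$, and $\E_{\mu^{(k)}}[f] = \frac{1}{k}\sum_{i=1}^k Y_i =: \bar Y_k$. By a union bound over the two tails, it suffices to show $\P(\bar Y_k - \mu > \varepsilon) \leq e^{-\varepsilon^2 k/2}$; the other tail follows by applying the same argument to $1-Y_i$ in place of $Y_i$ (which is again a $[0,1]$-valued random variable).

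For the one-sided bound, I would apply Markov's inequality to $e^{t(S_k - k\mu)}$ for any $t>0$, where $S_k := \sum_i Y_i$, obtaining
\[
\P(S_k - k\mu > k\varepsilon) \leq e^{-tk\varepsilon} \prod_{i=1}^k \E\bigl[e^{t(Y_i - \mu)}\bigr],
\]
where independence is used to factorize the moment generating function. The key technical input is Hoeffding's lemma: for any random variable $Y$ taking values in $[0,1]$ with $\E[Y] = \mu$, one has $\E[e^{t(Y-\mu)}] \leq e^{t^2/8}$. I would prove this by bounding $e^{tY}$ pointwise by the chord joining the endpoints $(0, 1)$ and $(1, e^t)$ on the exponential curve (using convexity of $x \mapsto e^{tx}$), taking expectations to reduce the problem to a deterministic one-variable inequality, and then studying the log-MGF $\psi(s) = \log\E[e^{s(Y-\mu)}]$ via its second derivative. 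The observation is that $\psi''(s)$ equals the variance of $Y$ under the exponentially tilted measure $e^{sY}/\E[e^{sY}]$, and this tilted measure still lives on $[0,1]$, so its variance is at most $1/4$. A second-order Taylor expansion around $s=0$ (using $\psi(0) = \psi'(0) = 0$) then yields $\psi(t)\leq t^2/8$.

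Substituting this back into the Markov--Chernoff bound gives $\P(S_k - k\mu > k\varepsilon) \leq \exp(-tk\varepsilon + kt^2/8)$. Optimizing the exponent in $t$ (the minimizer is $t = 4\varepsilon$) yields the sharper bound $e^{-2k\varepsilon^2}$, which a fortiori implies $e^{-\varepsilon^2 k/2}$; combining with the symmetric lower-tail bound produces the factor of $2$ in the claim. The main obstacle is Hoeffding's lemma itself: passing from boundedness to a sub-Gaussian MGF estimate requires the convexity-plus-calculus argument above, and the extraction of the universal constant $1/8$ is the one nontrivial step. Everything else---Markov's inequality, independence-based factorization, and optimization over $t$---follows the textbook Chernoff template and amounts to routine bookkeeping.
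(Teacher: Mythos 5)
Your proof is correct. The paper does not actually prove this statement---it is quoted as the classical Hoeffding (1963) inequality, cited to \cite{H}, and used as a black box in the proof of Theorem~\ref{theo:hoe-gen}---so there is no internal argument to compare against. Your Chernoff-plus-Hoeffding's-lemma derivation is the standard textbook proof; it in fact yields the sharper two-sided bound $2e^{-2k\varepsilon^2}$, which implies the weaker constant $2e^{-\varepsilon^2 k/2}$ stated in the theorem. (One small stylistic note: the chord bound and the tilted-measure variance bound on $\psi''$ are two alternative routes to Hoeffding's lemma; either one alone suffices, so you need not invoke both.)
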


When $(\Omega,\mu)$ is the product of $n$ probability spaces, we obtain a similar inequality in which the approximation is achieved by a product measure and the error term is independent of $n$.
  
\begin{theorem}\label{theo:hoe-gen}
Let $(\Omega_1,\mu_1),\ldots,(\Omega_n,\mu_n)$ be discrete probability spaces. Consider the product space $(\Omega=\prod_i\Omega_i,\mu=\prod_i\mu_i)$. For every $\varepsilon >0$, $k\in\mathbb N$, and $f:\Omega\to[0,1]$
\[
\P(|\E_{\prod_i\mu_i^{(k)}}[f]-\E_\mu[f]|>\varepsilon)\leq \frac{4e^{-\frac{\varepsilon^2}8k}}{\varepsilon}.
\]
\end{theorem}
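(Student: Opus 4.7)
The plan is to represent $Z := \E_{\prod_i \mu_i^{(k)}}[f]$ as a uniform average over one-dimensional Hoeffding sums indexed by tuples of permutations, so that the classical Hoeffding MGF bound lifts by a single Jensen step without incurring any factor of $n$. Let $X_i^{(1)}, \ldots, X_i^{(k)}$ denote the i.i.d.\ $\mu_i$-samples underlying $\mu_i^{(k)}$, so that
$$Z \;=\; \frac{1}{k^n} \sum_{T \in [k]^n} f\bigl(X_1^{(T_1)}, \ldots, X_n^{(T_n)}\bigr).$$

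For each $n$-tuple of permutations $\sigma = (\sigma_1, \ldots, \sigma_n) \in (S_k)^n$, I will define the \emph{diagonal estimator}
$$U_\sigma \;:=\; \frac{1}{k} \sum_{t=1}^k f\bigl(X_1^{(\sigma_1(t))}, \ldots, X_n^{(\sigma_n(t))}\bigr).$$
The first step is the identity $Z = \E_\sigma[U_\sigma]$ when $\sigma$ is drawn uniformly from $(S_k)^n$. This is immediate: for a uniform $\sigma_i$ and any fixed $t$, $\sigma_i(t)$ is uniform on $[k]$, and the $\sigma_i(t)$'s are independent across $i$, so the $\sigma$-average of each summand of $U_\sigma$ collapses to the full product-average $Z$.

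The second step observes that for each fixed $\sigma$, $U_\sigma$ is already an ordinary empirical average of $k$ \emph{independent} $[0,1]$-valued random variables, each distributed as $f(X)$ with $X \sim \mu$. Indeed, because each $\sigma_i$ is a bijection of $[k]$, the index tuples $(\sigma_1(t), \ldots, \sigma_n(t))_{t=1}^k$ are pairwise disjoint coordinate-wise, so the primitive samples $X_i^{(\sigma_i(t))}$ appearing in different $t$-summands of $U_\sigma$ are all distinct and therefore independent. Hoeffding's classical MGF bound thus yields $\E[\exp(\lambda (U_\sigma - \E_\mu f))] \leq \exp(\lambda^2 / (8k))$ for every $\lambda \in \mathbb{R}$, uniformly in $\sigma$.

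Combining the two via Jensen's inequality applied to the convex function $y \mapsto e^{\lambda y}$ gives
$$\exp\bigl(\lambda (Z - \E_\mu f)\bigr) \;=\; \exp\bigl(\lambda\,\E_\sigma[U_\sigma - \E_\mu f]\bigr) \;\leq\; \E_\sigma \exp\bigl(\lambda(U_\sigma - \E_\mu f)\bigr),$$
and averaging over the sample randomness $X$ and swapping the two expectations via Fubini yields $\E_X \exp(\lambda (Z - \E_\mu f)) \leq \exp(\lambda^2/(8k))$. A two-sided Chernoff bound at $\lambda = \pm 4\varepsilon k$ then produces $\P(|Z - \E_\mu f| > \varepsilon) \leq 2\exp(-2\varepsilon^2 k)$, which is strictly stronger than the claimed $4\varepsilon^{-1}\exp(-\varepsilon^2 k/8)$ for all $\varepsilon \in (0,1)$ and $k \geq 1$. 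The main obstacle is spotting the diagonal identity $Z = \E_\sigma[U_\sigma]$; once it is in place, the ``independence along a Latin diagonal'' observation is automatic from the bijectivity of each $\sigma_i$, and the rest is a routine Jensen-plus-Hoeffding chain whose bound is manifestly independent of $n$.
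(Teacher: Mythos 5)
Your proof is correct, and it rests on the same central idea as the paper's: decompose the product-empirical average $\E_{\prod_i\mu_i^{(k)}}[f]$ into an average of ``diagonal'' estimators, each of which is a genuine empirical mean of $k$ i.i.d.\ samples from the product measure $\mu$. The difference lies in how the two arguments pass from concentration of each diagonal to concentration of their average, and yours is strictly sharper. The paper uses the cyclic diagonals $j_*\mapsto j_*+l \pmod{k}$, applies Hoeffding's tail bound to each, encodes ``diagonal $j_*$ deviates by more than $\varepsilon/2$'' in an indicator $d(j_*)$, and finishes with the triangle inequality plus Markov's inequality on $\frac{1}{k^n}\sum_{j_*}d(j_*)$; that last step is precisely what introduces the prefactor $4/\varepsilon$ and degrades the exponent from $2\varepsilon^2 k$ to $\varepsilon^2k/8$. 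You instead average over the richer family of Latin diagonals indexed by $(S_k)^n$ and, crucially, perform the averaging at the level of moment generating functions: Jensen's inequality gives $\E_X e^{\lambda(Z-\E_\mu f)}\le\E_\sigma\E_X e^{\lambda(U_\sigma-\E_\mu f)}\le e^{\lambda^2/(8k)}$ with no loss, and Chernoff then yields $2e^{-2\varepsilon^2k}$ --- the same constants as one-dimensional Hoeffding, independent of $n$. (This is Hoeffding's own convexity device for U-statistics and sampling without replacement.) Your two supporting claims --- the identity $Z=\E_\sigma[U_\sigma]$, which follows because $(\sigma_1(t),\dots,\sigma_n(t))$ is uniform on $[k]^n$ for fixed $t$, and the independence of the $k$ summands of $U_\sigma$ from the coordinate-wise disjointness of the Latin diagonal --- are both verified correctly, and the final comparison $2e^{-2\varepsilon^2k}\le 4\varepsilon^{-1}e^{-\varepsilon^2k/8}$ for $\varepsilon\in(0,1)$ (the case $\varepsilon\ge 1$ being vacuous since $f$ takes values in $[0,1]$) is also right. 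As a side benefit, substituting your bound into Lemma~\ref{pro:main} would remove the $|\log\varepsilon|$ terms and improve the constants in Theorem~\ref{theo:ne} and its corollaries.
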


Lemma~\ref{pro:main} follows immediately from Theorem~\ref{theo:hoe-gen}.
\begin{proof}[Proof of Lemma~\ref{pro:main}]
The strategy profile $s_{-i}^k$ is just the $k$-sampling approximation of $x_{-i}$.
\end{proof}

\begin{proof}[Proof of Theorem~\ref{theo:hoe-gen}]
For every $i\in[n]$, let us implement each $\mu_i^{(k)}$ as the average of the $k$ i.i.d. random variables $x_1^i,\ldots,x_k^i$ prescribed by the definition of $k$-sampling approximation. We begin by rewriting $\E_\mu[f]$. For every $l\in[k]$, we can write
\begin{equation*}
\E_\mu[f]=\frac{1}{k^{n}}\underset{j_1,j_2,...,j_n \in [k]}{\sum} \E[f(x_{j_1+l}^1,...,x_{j_n+l}^n)],
\end{equation*}
where the indexes $j_i+l$ are taken modulo $k$. If we take the average over all possible $l$ we have
\begin{equation}\label{eq:pay}
\E_\mu[f]=\frac{1}{k^{n}}\underset{j_1,j_2,...,j_n \in [k]}{\sum} \frac{1}{k} \underset{l\in[k]}{\sum} \E[f(x_{j_1+l}^1,...,x_{j_n+l}^n)]. 
\end{equation}
For every initial multi-index $j_*=(j_1,j_2,...,j_n)\in [k]^{n}$ and every $l\in [k]$, we denote $x_{j_*+l}:=(x_{j_1+l}^1,...,x_{j_n+l}^n)\in \Omega$, and we define the random variable 
\begin{equation}\label{eq:d}
d(j_*):= \begin{cases}
0 & \text{if } \left\lvert \frac{1}{k}\underset{l\in[k]}{\sum}\E[f(x_{j_*+l})]-\E_\mu[f] \right\rvert \leq \dfrac{\varepsilon}{2}, \\
1 & \text{otherwise.}
\end{cases}
\end{equation}
By the definition of $d(j_*)$, we have
\begin{equation}\label{eq:din}
d(j_*)+\frac{\varepsilon}{2} \geq \left\lvert \frac{1}{k}\underset{l\in[k]}{\sum}\E[f(x_{j_*+l})]-\E_\mu[f] \right\rvert.
\end{equation}
Note also that for any fixed $j_*$, the random variables $x_{j_*+1},x_{j_*+2}, \ldots,x_{j_*+k}$ are independent with distribution $\mu$; therefore their average implements $\mu^{(k)}$; therefore, by Hoeffding's inequality, we have
\begin{equation}\label{eq:hof}
\E[d(j_*)]\leq 2 e^{-\frac{\varepsilon^2}{8}k}.
\end{equation}
Using representation (\ref{eq:pay}) of $\E_\mu[f]$ and inequalities (\ref{eq:din}) and (\ref{eq:hof}), we get
\begin{equation}\label{eq:fin}
\begin{aligned}
&\P(|\E_{\prod_i(\mu_i^{(k)})}[f]-\E_\mu[f]|\geq \varepsilon)= \\ 
&= \P \left( \left\lvert \frac{1}{k^{n}} \underset{j_* \in [k]^{n}}{\sum} \frac{1}{k} \underset{l\in[k]}{\sum} \E[f(x_{j_*+l})]-\E_\mu[f] \right\rvert \geq \varepsilon \right) \\
&\leq \P \left( \frac{1}{k^{n}} \underset{j_* \in [k]^{n}}{\sum} \left\lvert \frac{1}{k} \underset{l\in[k]}{\sum} \E[f(x_{j_*+l})]-\E_\mu[f] \right\rvert \geq \varepsilon \right) \\
&\leq \P \left( \frac{1}{k^{n}} \underset{j_* \in [k]^{n}}{\sum} d(j_*) \geq \frac{\varepsilon}{2} \right) \leq \frac{4 e^{-\frac{\varepsilon^2}{8}k}}{\varepsilon},
\end{aligned}
\end{equation}
where the last inequality follows from Markov's inequality.
\end{proof}

\end{document}